\providecommand{\tabularnewline}{\\}
\newcommand{\lyxaddress}[1]{
	\par {\raggedright #1
	\vspace{1.4em}
	\noindent\par}
}
\theoremstyle{plain}
\newtheorem{thm}{\protect\theoremname}
\theoremstyle{remark}
\newtheorem*{acknowledgement*}{\protect\acknowledgementname}
\providecommand{\acknowledgementname}{Acknowledgement}
\providecommand{\theoremname}{Theorem}
\begin{document}
\global\long\def\ket#1{|#1\rangle}%
 
\global\long\def\bra#1{\langle#1|}%
 
\global\long\def\braket#1#2{\langle#1|#2\rangle}%
 
\global\long\def\bk#1#2{\langle#1|#2\rangle}%
 
\global\long\def\kb#1#2{|#1\rangle\langle#2|}%

\global\long\def\c#1{\mathbb{C}^{#1}}%

\global\long\def\abs#1{\mid#1\mid}%

\global\long\def\avg#1{\langle#1\rangle}%
\global\long\def\norm#1{\Vert#1\Vert}%
\global\long\def\normcb#1{\Vert#1\Vert_{0}}%

\title{\vspace{-1cm}
Uncertainty relations\\
for the support of quantum states}
\author{Vincenzo Fiorentino and Stefan Weigert}
\date{January 2023}
\maketitle

\lyxaddress{\begin{center}
Department of Mathematics, University of York\\
York YO10 5DD, United Kingdom\\
vincenzo.fiorentino@york.ac.uk$\qquad$stefan.weigert@york.ac.uk
\par\end{center}}
\begin{abstract}
Given a narrow signal over the real line, there is a limit to the
localisation of its Fourier transform. In spaces of prime dimensions,
Tao derived a sharp state-independent uncertainty relation which holds
for the support sizes of a pure qudit state in two bases related by
a discrete Fourier transform. We generalise Tao's uncertainty relation
to complete sets of mutually unbiased bases in spaces of prime dimensions.
The bound we obtain appears to be sharp for dimension three only.
Analytic and numerical results for prime dimensions up to nineteen
suggest that the bound cannot be saturated in general. For prime dimensions
two to seven we construct sharp bounds on the support sizes in $(d+1)$
mutually unbiased bases and identify some of the states achieving
them.
\end{abstract}
\tableofcontents{}

\section{\label{sec:Introduction}Introduction}

No quantum particle can reside in a state with both its position and
momentum distributions being localised arbitrarily well. For these
incompatible observables, Heisenberg's uncertainty relation \cite{heisenberg_uber_1927,kennard_zur_1927}
establishes a finite lower bound for the product of their variances.
This result relies on a fundamental property of Fourier theory: a
real (or complex) function with \emph{finite} support on the real
line has a Fourier transform which must be non-zero almost everywhere
\cite{folland_uncertainty_1997}. It is, however, difficult to quantify
the support of functions on unbounded intervals. Using variances instead
of the supports of probability distributions circumvents this difficulty.

The situation is different for quantum systems with finite-dimensional
Hilbert spaces since the \emph{support (size)} of a pure state---defined
as the number of non-zero components in a given orthonormal basis---is
always finite. A computational basis state in $\c d$, say, has support
equal to one, and the support of its (discrete) Fourier transform
equals $d$ since all basis states contribute. Thus, the \emph{product}
of the support sizes equals $d$ which turns out to be its smallest
possible value \cite{donoho_uncertainty_1989}.

The underlying product inequality has been generalised in a number
of directions \cite{meshulam_uncertainty_1992,wigderson_uncertainty_2021}.
Tao derived an \emph{additive} inequality \cite{tao_uncertainty_2005}
which is valid in spaces $\c d$ of prime dimensions $d=p$: the \emph{sum}
of the supports of a state and its discrete Fourier transform is bounded
from below by the value $(p+1)$. This bound is sharp since a computational
basis state and its Fourier transform saturate it.

Support inequalities and their generalisations have found applications
in signal processing \cite{candes_robust_2006,candes_quantitative_2006},
for example, and they can be used to identify non-classical quantum
states \cite{de_bievre_complete_2021} whose \textit{\emph{Kirkwood-Dirac
quasiprobability distribution}} \cite{kirkwood_quantum_1933,dirac_analogy_1945}
is \textit{not} a probability distribution. Such states provide an
advantage in quantum metrology \cite{arvidsson-shukur_quantum_2020}
and play a role in weak measurements \cite{hofmann_role_2011,dressel_significance_2012,arvidsson-shukur_conditions_2021}
and contextuality \cite{kunjwal_anomalous_2019}.

Using the support size of a quantum state as a measure of uncertainty
has an unexpected---and previously unnoticed---operational advantage.
Quantum supports take a \emph{finite} set of integer values only,
in stark contrast to other measures. Variances of observables in a
given state or its von Neumann entropy take \emph{real numbers} as
values which demands many measurements to determine experimentally.
However, a \emph{finite} number of measurements may already suffice
to determine the \emph{exact} support size of a quantum state. This
situation occurs whenever the state at hand has $(i)$ ``full''
support in the basis considered and $(ii)$ each outcome has been
registered at least once. Conformity with a given support inequality
may be verified by a \emph{finite }number of measurements as small
as the bound itself. This property depends, of course, on the assumption
that outcomes with probability zero never occur; limited detection
efficiency does not invalidate the argument, however.

Variance-based uncertainty relations also exist for more than two
observables associated with multiple orthonormal bases \cite{kechrimparis_heisenberg_2014,kechrimparis_geometry_2017,dodonov_generalized_1980,dodonov_variance_2018}:
position and momentum may be supplemented by a third continuous variable
which is canonical to each of them. The eigenbases of these three
observables are mutually unbiased and related by fractional Fourier
transforms. The product of their variances satisfies a \emph{triple
uncertainty relation} \cite{kechrimparis_heisenberg_2014}. Importantly,
the lower bound of this inequality does \emph{not} follow from the
pair uncertainty relations but must be determined independently. No
quantum state exists which satisfies all three pair uncertainty relations
simultaneously. In a similar vein, entropic uncertainty relations
capture the incompatibility of up to $(d+1)$ observables in finite-dimensional
systems, linked to a complete set of mutually unbiased bases known
to exist in prime-power dimension \cite{ivanovic_inequality_1992,sanchez_entropic_1993,ballester_entropic_2007}. 

The main goal of this paper is to extend Tao's additive support uncertainty
relation to the case of more than two bases, inspired by the triple
uncertainty relation for continuous variables. The focus will be on
prime-dimensional spaces where $(p+1)$ mutually unbiased bases exist,
known as \emph{complete sets}. The support sizes of a state in any
pair of mutually unbiased bases from such a set are expected to satisfy
Tao's bound but it is unlikely that they will saturate all pair bounds
simultaneously.

This paper is structured as follows. Sec. \ref{sec: Preliminaries}
sets up notation by briefly describing known product and sum inequalities
for the support of a vector, and the properties of complete sets of
MU bases are summarised. In Sec. \ref{sec:The-extended-support-inequality},
Tao's additive uncertainty relation for the support of quantum states
is shown to hold for any pair of mutually unbiased bases in the complete
set considered, and the generalised additive support inequality involving
all $(p+1)$ mutually unbiased bases is established as a direct consequence.
According to Sec. \ref{sec:Saturating-the-inequality} the bounds
provided by the generalised support inequality cannot be saturated
for prime dimensions $2\leq d\leq19$, except $d=3$. Higher \emph{achievable}
bounds are derived in Sec. \ref{sec:Achievable-bounds}, for prime
numbers up to $d=7$. In the last section, we summarise and discuss
the results obtained. The proofs of some lemmata are relegated to
an appendix.

\section{Preliminaries \label{sec: Preliminaries}}

\subsection{Support inequalities for a Fourier pair of bases \label{subsec: Support-inequality-Fourier-pairs}}

The \emph{support }(\emph{size}) of a Hilbert-space vector $\psi\in{\cal H}_{d}$
is given by the number of its non-zero expansion coefficients $\psi_{v}=\bk v{\psi}$
in an orthonormal basis ${\cal B}=\left\{ \ket v,v=0,1,\ldots,\,d-1\right\} $,
\begin{equation}
|\text{supp}(\psi,\mathcal{B})|=\#(\psi_{v}\neq0,v=0\ldots d-1)\in\left\{ 0\ldots\,d\right\} \,.\label{eq: define support}
\end{equation}
The only vector with vanishing support is the zero vector. Due to
normalisation, the support of a quantum state must be at least one,
and the maximum is achieved whenever the state $\psi$ is a linear
combination of all $d$ basis states. The support size of a state
clearly depends on the chosen basis. Formally, the support size can
be obtained as a limit of the Rényi entropy \cite{renyi_measures_1961}
of the probability distribution $\left\{ |\psi_{v}|^{2},v=0\ldots d-1\right\} $.

Thinking of the support size as the (improper) $L^{0}$-``norm''
of $\psi$, we will use the notation 
\begin{equation}
|\text{supp}(\psi,\mathcal{B})|=\norm{\psi}_{\mathcal{B}}\,.\label{eq: define zero-norm}
\end{equation}

The set of expansion coefficients $\left\{ \psi_{v},v=0\ldots d-1\right\} $
has three obvious \emph{support-conserving} symmetries. The support
size is invariant $(i)$ under rephasing each expansion coefficient
separately,

\begin{equation}
||\psi||_{\mathcal{B}}=||R\psi||_{\mathcal{B}}\,,\qquad R=\text{diag}(e^{i\tau_{0}},e^{i\tau_{1}},\ldots,e^{i\tau_{d-1}})\,,\label{eq: phase symmetry}
\end{equation}
with real numbers $\tau_{v},v=0,\ldots,\,d-1$; $(ii)$ under permuting
the components of any state among themselves 
\begin{equation}
||\psi||_{\mathcal{B}}=||P\psi||_{\mathcal{B}}\,,\qquad P\in S_{d}\,,\label{eq: perm symmetry}
\end{equation}
where $S_{d}$ is the permutation group acting on sets of $d$ elements;
and $(iii)$ under the complex conjugation of some (or all) of its
components,
\begin{equation}
||\psi||_{\mathcal{B}}=||K\psi||_{\mathcal{B}}\,,\qquad K=\prod_{\text{some }v\in\left\{ 0\ldots\,d-1\right\} }K_{v}\,,\label{eq: conjug symmetry}
\end{equation}
where each operator $K_{v},v=0,\ldots,\,d-1$, maps one expansion
coefficient of the state $\psi$ in the basis $\mathcal{B}$ to its
complex conjugate, $K_{v}\psi_{v}=\psi_{v}^{*}$, and does not change
the others. In the basis $\mathcal{B}$, the permutations $P$ are
represented by a matrix of order $d$ containing exactly one unit
entry in each row and column; hence the unitary invariances of rephasing
and permuting coefficients are conveniently combined into \emph{monomial}
matrices $M\equiv RP$. The third invariance described by the operators
$K_{v}$ will play no role.

Given two distinct orthonormal bases ${\cal B}$ and ${\cal B}^{\prime}$
of ${\cal H}_{d}$, one may ask to which extent a state can be ``localised''
in both of them. Clearly, the product of its support sizes in ${\cal B}$
and ${\cal B}^{\prime}$ may take values between one and $d^{2}$.
If the bases are related by $\mathcal{B}^{\prime}=F\mathcal{B}$,
where $F$ is the discrete Fourier transform with matrix elements
(in the $\mathcal{B}$-basis)
\begin{equation}
F_{vv^{\prime}}=\frac{1}{\sqrt{d}}e^{-2\pi ivv^{\prime}/d}\qquad v,v^{\prime}\in\left\{ 0\ldots\,d-1\right\} \,,\label{eq: Fourier matrix}
\end{equation}
then the \textit{product} of the support sizes of a state $\psi$
and its Fourier transform $\widetilde{\psi}=F^{\dagger}\psi$ is bounded
from below \cite{donoho_uncertainty_1989},
\begin{equation}
\norm{\psi}_{\mathcal{B}}\,||\psi||_{\mathcal{B}^{\prime}}\geq d\,,\label{eq:Product UR (Donoho=000026Stark) - Introduction}
\end{equation}
where we use the fact that the support size of the Fourier transformed
state $\widetilde{\psi}$ in the basis $\mathcal{B}$ is equal to
the support size of the state $\psi$ in the basis $\mathcal{B}^{\prime}$,
i.e. 
\begin{equation}
||\widetilde{\psi}||_{\mathcal{B}}=||F^{\dagger}\psi||_{\mathcal{B}}=||\psi||_{\mathcal{B}^{\prime}}\,.\label{eq: Fourier support notation}
\end{equation}
The inequality (\ref{eq:Product UR (Donoho=000026Stark) - Introduction})
represents a finite-dimensional equivalent of Heisenberg's uncertainty
relation for position and momentum observables of a quantum particle:
quantum states localised in position, say, necessarily come with a
broad variance in momentum, the Fourier-transformed position observable.

For spaces ${\cal H}_{d}$ with \emph{prime }dimensions $d$, an \emph{additive}
inequality for the supports of a quantum state in a pair of Fourier-related
bases is known \cite{tao_uncertainty_2005},
\begin{equation}
\norm{\psi}_{\mathcal{B}}+||\psi||_{\mathcal{B}^{\prime}}\geq d+1\,,\label{eq:Sum UR (Tao) - Introduction}
\end{equation}
which is stronger than the multiplicative relation (\ref{eq:Product UR (Donoho=000026Stark) - Introduction}),
as follows from the inequality $d+1-x\geq d/x$, for $x\in[1,d]$.
\textit{\emph{In the terminology of }}\cite{de_bievre_complete_2021},
any two bases ${\cal B}$ and ${\cal B}^{\prime}$ are said to be
\emph{completely incompatible} if and only if the support sizes of
the expansion coefficients of any (non-zero) vector $\psi\in\mathcal{H}_{d}$
satisfy this bound.

The inequality (\ref{eq:Sum UR (Tao) - Introduction}) is a special
case of a theorem valid for finite additive Abelian groups $G$ with
$|G|$ elements and trivial subgroups only which necessitates the
restriction to prime dimensions \cite{tao_uncertainty_2005}. Consider
a complex-valued function $f:G\to\mathbb{C}$ and its transform $\widetilde{f}:G\to\mathbb{C}$,
defined by 
\begin{equation}
\widetilde{f}(v^{\prime})=\frac{1}{\sqrt{|G|}}\sum_{v\in G}f(v)\overline{e(v,v^{\prime})}\,,\label{eq:Fourier transform of f-2-1}
\end{equation}
where $e(v,v^{\prime})$ is a ``bi-character'' of $G$ satisfying
$e(v_{1}+v_{2},v^{\prime})=e(v_{1},v^{\prime})e(v_{2},v^{\prime})$
and an analogous relation for its second argument. In the particular
case of $e(v,v^{\prime})=e^{-2\pi ivv^{\prime}/d}$, one obtains an
inequality for the supports of $f$ and $\widetilde{f}\equiv F^{\dagger}f$.
\begin{thm}[\emph{Tao's theorem }]
\label{thm:Tao's-theorem-2-1}If $f:G\to\mathbb{C}$ is a non-zero
function and the cardinality $|G|$ of the group $G$ is prime, then
\begin{equation}
|\text{\emph{supp}}(f)|+|\text{\emph{supp}}(\widetilde{f})|\geq|G|+1\,.\label{eq:additive UR simple}
\end{equation}
\end{thm}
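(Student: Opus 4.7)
My plan is to reduce the statement to \emph{Chebotarev's theorem}, which asserts that every square submatrix of the $p\times p$ discrete Fourier matrix has non-zero determinant whenever $p$ is prime. The only finite additive abelian groups $G$ with $|G|=p$ prime are the cyclic groups $\mathbb{Z}/p\mathbb{Z}$, and any non-degenerate bi-character on such a group coincides with $e^{-2\pi i v v'/p}$ up to a relabelling of rows and columns. Since relabelling preserves supports, it is enough to work with the standard DFT matrix $F$ of equation (\ref{eq: Fourier matrix}).

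Given this reduction, I would argue by contradiction. Suppose some non-zero $f$ satisfies $A+B\leq p$, with $A=|\text{supp}(f)|$ and $B=|\text{supp}(\widetilde{f})|$. Put $S=\text{supp}(f)$, so $|S|=A$, and choose any $T\subseteq G\setminus\text{supp}(\widetilde{f})$ with $|T|=A$; such a $T$ exists because $|G\setminus\text{supp}(\widetilde{f})|=p-B\geq A$. Reading the defining relation (\ref{eq:Fourier transform of f-2-1}) only at $v'\in T$, and noting that $f$ vanishes outside $S$, we obtain the homogeneous square system
\[
F_{T,S}\,f|_S=0\,,
\]
where $F_{T,S}$ denotes the $A\times A$ submatrix of the DFT indexed by rows in $T$ and columns in $S$. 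Chebotarev's theorem gives $\det F_{T,S}\neq 0$, so $f|_S=0$ and hence $f\equiv 0$, contradicting the hypothesis.

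The main obstacle is Chebotarev's theorem itself, which is genuinely non-elementary. One views $\det F_{T,S}$, up to a harmless prefactor of $p^{-A/2}$, as an element of the ring of integers $\mathbb{Z}[\omega]$ of the cyclotomic field $\mathbb{Q}(\omega)$ with $\omega=e^{-2\pi i/p}$. Using that $(1-\omega)$ generates the unique prime of $\mathbb{Z}[\omega]$ lying over $p$, that the Galois group $\text{Gal}(\mathbb{Q}(\omega)/\mathbb{Q})\cong(\mathbb{Z}/p\mathbb{Z})^{\times}$ acts transitively on primitive $p$-th roots of unity, and a Vandermonde-type expansion of the minor, one computes the $(1-\omega)$-adic valuation of $\det F_{T,S}$ and finds it strictly smaller than what vanishing would demand. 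Primality of $p$ is essential here: it guarantees that $\mathbb{Q}(\omega)/\mathbb{Q}$ has degree $p-1$ with no intermediate subfields to disrupt the valuation count, and in composite dimensions both Chebotarev's result and the inequality (\ref{eq:Sum UR (Tao) - Introduction}) famously fail.
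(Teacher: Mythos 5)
Your argument is correct and follows essentially the same route the paper relies on: the paper imports Tao's theorem as a cited result and notes that its proof rests on Chebotarëv's theorem (Theorem \ref{thm:Chebotarev's theorem}), which is exactly the reduction you carry out --- a square homogeneous system $F_{T,S}\,f|_{S}=0$ with rows taken from the zero set of $\widetilde{f}$ and columns from $\text{supp}(f)$, forced to be trivial because the corresponding minor is non-zero. Your closing paragraph on Chebotarëv's theorem is only a sketch, but since the paper itself treats that result as a quoted black box (and your use of a non-degenerate bi-character matches the paper's specialisation to $e(v,v')=e^{-2\pi i vv'/d}$), nothing further is needed.
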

Upon identifying $f\left(v\right)$ with \textbf{$\braket v{\psi}$}
and $\widetilde{f}(v^{\prime})$ with $\braket{v^{\prime}}{\psi}$,
respectively, we obtain the inequality (\ref{eq:Sum UR (Tao) - Introduction})
relative to the bases ${\cal B}$ and $\mathcal{B}^{\prime}$ introduced
via $F$ in Eq. (\ref{eq: Fourier matrix}).

The main ingredient of Tao's proof is a fundamental property of the
Fourier matrix in prime dimensions \cite{stevenhagen_chebotarev_1996,frenkel_simple_2004,tao_uncertainty_2005}
which dates back to the 1920s: all its square submatrices are invertible.
\begin{thm}[\emph{Chebotarëv's theorem}]
\label{thm:Chebotarev's theorem} If $d$ is prime, then all minors
of the Fourier matrix $F$ in Eq. (\ref{eq: Fourier matrix}) are
non-zero.
\end{thm}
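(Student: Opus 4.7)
The plan is to prove the theorem by a $p$-adic valuation argument in the cyclotomic ring $\mathbb{Z}[\omega]$, where $\omega = e^{-2\pi i/p}$. After dropping the irrelevant normalisation $(1/\sqrt{p})^{k}$, it suffices to show that for any two $k$-element subsets $A = \{a_{1} < \cdots < a_{k}\}$ and $B = \{b_{1} < \cdots < b_{k}\}$ of $\{0,\ldots,p-1\}$ with $1 \le k \le p$, the quantity
\[
\Delta(A,B) \;:=\; \det(\omega^{a_{i}b_{j}})_{i,j=1}^{k}
\]
is non-zero in $\mathbb{Z}[\omega]$.

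The idea is to view $\Delta(A,B)$ as the evaluation at $x = \omega$ of the integer polynomial $D(x) := \det(x^{a_{i}b_{j}}) \in \mathbb{Z}[x]$. Substituting $y_{j} = x^{b_{j}}$ presents $D(x) = \det(y_{j}^{a_{i}})$ as a generalised Vandermonde determinant; the bialternant (Jacobi) identity factors it as
\[
D(x) \;=\; \pm\, s_{\lambda}(x^{b_{1}},\ldots,x^{b_{k}})\, \prod_{1 \le i < j \le k} \bigl(x^{b_{j}} - x^{b_{i}}\bigr),
\]
where $s_{\lambda}$ is the Schur polynomial associated with the partition $\lambda_{i} = a_{k+1-i} - (k-i)$. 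Each factor $x^{b_{j}} - x^{b_{i}}$ vanishes to first order at $x = 1$ with derivative $b_{j} - b_{i}$, so the Vandermonde product contributes $(1-x)^{\binom{k}{2}}\prod_{i<j}(b_{j}-b_{i})$ up to sign. The Weyl dimension formula gives $s_{\lambda}(1,\ldots,1) = \prod_{i<j}(a_{j}-a_{i})/\prod_{m=1}^{k-1} m!$. Combining these, $D(x) = (1-x)^{\binom{k}{2}} E(x)$ with $E \in \mathbb{Z}[x]$ (by Gauss' lemma) and
\[
E(1) \;=\; \frac{\pm 1}{\prod_{m=1}^{k-1} m!}\,\prod_{1 \le i < j \le k}(a_{j} - a_{i})(b_{j} - b_{i}).
\]

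To conclude, I would pass to the residue field at the prime ideal $(1-\omega)$ of $\mathbb{Z}[\omega]$. Because $p$ is totally ramified, $p = u(1-\omega)^{p-1}$ for a unit $u$, so $\mathbb{Z}[\omega]/(1-\omega) \cong \mathbb{F}_{p}$ and reduction modulo $(1-\omega)$ sends $\omega \mapsto 1$; in particular $E(\omega) \equiv E(1) \pmod{(1-\omega)}$. Since $k \le p$, every factorial $m!$ with $m < k$ is a unit modulo $p$; since the $a_{i}$ (resp.\ $b_{j}$) are distinct in $\mathbb{F}_{p}$, so is every Vandermonde difference. Therefore $E(1) \not\equiv 0 \pmod{p}$, which forces $E(\omega) \ne 0$, and hence $\Delta(A,B) = (1-\omega)^{\binom{k}{2}} E(\omega) \ne 0$ in the integral domain $\mathbb{Z}[\omega]$.

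The main technical obstacle is the bialternant factorisation together with the closed-form evaluation $s_{\lambda}(1,\ldots,1)$: both are standard consequences of the Weyl character formula in type $A$, but they must be combined so that the factorial denominator and the Vandermonde numerator simultaneously survive reduction modulo $p$. This is precisely the step at which primality of $p$ is indispensable: a composite modulus would allow a non-trivial divisor of $p$ to appear in some factorial or some Vandermonde difference, and the argument would collapse — as it must, since the statement is known to fail in composite dimensions.
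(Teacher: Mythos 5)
Your argument is correct, but note that the paper itself does not prove this statement: Chebotar\"ev's theorem is quoted as a known result, with the proofs delegated to the cited literature (Stevenhagen--Lenstra, Frenkel, Tao), so there is no in-paper proof to compare against. Your route --- writing the minor as $D(\omega)$ with $D(x)=\det(x^{a_ib_j})\in\mathbb{Z}[x]$, factoring $D$ via the bialternant identity into a Schur polynomial times a Vandermonde product in the variables $x^{b_j}$, extracting $(x-1)^{\binom{k}{2}}$, evaluating the integer cofactor at $x=1$ by the dimension formula $s_\lambda(1,\dots,1)=\prod_{i<j}(a_j-a_i)/\prod_{m<k}m!$, and then reducing modulo the totally ramified prime $(1-\omega)$ with residue field $\mathbb{F}_p$ --- is a standard and complete proof strategy, close in spirit to the short proofs in the cited references; primality enters exactly where you say it does (the factorials $m!$, $m\le k-1\le p-1$, and the differences $a_j-a_i$, $b_j-b_i$ are units mod $p$, so $E(1)\not\equiv 0 \pmod p$, hence $E(\omega)\notin(1-\omega)$ and $D(\omega)\neq 0$ in the domain $\mathbb{Z}[\omega]$). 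Two cosmetic points: the appeal to Gauss' lemma is superfluous, since $E(x)=\pm\,s_\lambda(x^{b_1},\dots,x^{b_k})\prod_{i<j}x^{b_i}\bigl(1+x+\cdots+x^{b_j-b_i-1}\bigr)$ is manifestly in $\mathbb{Z}[x]$ (or, alternatively, because the divisor $(x-1)^{\binom{k}{2}}$ is monic); and the phrase ``the Vandermonde product contributes $(1-x)^{\binom{k}{2}}\prod_{i<j}(b_j-b_i)$'' should read that it contributes $(x-1)^{\binom{k}{2}}$ times an integer polynomial whose value at $x=1$ is $\prod_{i<j}(b_j-b_i)$, which is what your subsequent computation actually uses.
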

The inequalities (\ref{eq:Product UR (Donoho=000026Stark) - Introduction})
and (\ref{eq:Sum UR (Tao) - Introduction}) involve a \emph{pair}
of \textit{\emph{mutually unbiased}}\emph{ }bases of $\mathcal{H}_{d}$,
namely the computational basis ${\cal B}$ and its Fourier transform.
We will now introduce larger sets of mutually unbiased bases to formulate
more general support inequalities. Not surprisingly, Chebotarëv's
theorem must be generalised to other matrices which emerge when establishing
bounds on support sizes of quantum states in multiple bases (cf. Sec.
\ref{subsec:Support-inequality-MU-pairs}). 

\subsection{Mutually unbiased bases in prime dimensions \label{sec:Mutually-unbiased-bases}}

Two orthonormal bases of the space $\mathcal{H}_{d}=\mathbb{C}^{d}$
are said to be \textit{mutually unbiased}\textit{\emph{ (MU)}} if
the inner products between any two states (not of the same basis)
have modulus $1/\sqrt{d}$. Then, to know the outcome of a projective
measurement performed in one basis implies complete uncertainty about
the outcome of a subsequent projective measurement performed in the
other.

When $d$ is a power of a prime number $p$, i.e. $d=p^{n}$, sets
of $(d+1)$ mutually unbiased bases have been constructed \cite{ivanovic_geometrical_1981,wootters_optimal_1989,bandyopadhyay_new_2002,durt_about_2005}.
Such complete sets are both \textit{maximal}---in the sense that
no additional MU basis can be added to it---and \textit{tomographically
complete}\textit{\emph{:}}\emph{ }the probability distributions of
outcomes in the $(d+1)$ bases uniquely encode an unknown quantum
state. It is an open problem whether complete sets of MU bases exist
in composite dimensions, $d\neq p^{n}$.

For $d=2$, the eigenstates of the Pauli operators $Z_{2}$, $X_{2}$
and $X_{2}Z_{2}=-iY_{2}$ form a complete set which has a simple structure.
Representing the computational basis ${\cal B}_{0}$ by the identity
matrix $H_{0}=I_{\left(2\times2\right)}$, the following two Hadamard
matrices encode the bases which are MU to ${\cal B}_{0}$,
\begin{equation}
H_{1}=F=\frac{1}{\sqrt{2}}\begin{bmatrix}1 & 1\\
1 & -1
\end{bmatrix},\qquad\qquad H_{2}=DF\quad\text{where}\;D=\begin{bmatrix}1 & 0\\
0 & i
\end{bmatrix}\,.\label{eq:Hadamard matrices in d=00003D2}
\end{equation}

If $d$ is an odd prime, the eigenstates of the $(d+1)$ generalised
Pauli operators $Z_{d}$, $X_{d}$, $X_{d}Z_{d}$, $X_{d}Z_{d}^{2}$,
..., $X_{d}Z_{d}^{d-1}$, represent a maximal set of MU bases. The
$k$-th state of the $j$-th basis is given by 
\begin{equation}
\ket{\phi_{k}^{j}}=\frac{1}{\sqrt{d}}\sum_{x=0}^{d-1}\omega^{-kx}\omega^{\left(j-1\right)x^{2}}\ket{\phi_{x}^{0}}\,,\qquad j\in\left\{ 1...\,d\right\} \,,\quad k\in\left\{ 0...\,d-1\right\} \,,\label{eq:MU basis states odd prime d}
\end{equation}
where the states $\left\{ \ket{\phi_{x}^{0}}\equiv\ket x,x=0\ldots d-1\right\} $
form the computational basis ${\cal B}_{0}$ and $\omega\equiv e^{2i\pi/d}$
is a $d$-th root of the number $1$ \cite{durt_about_2005}. For
each value of $j$, the equimodular expansion coefficients 
\begin{equation}
\left[H_{j}\right]_{xk}=\langle x\ket{\phi_{k}^{j}}=\frac{1}{\sqrt{d}}\omega^{-kx}\omega^{\left(j-1\right)x^{2}}\,,\qquad j\in\left\{ 1...\,d\right\} ,\,x,k\in\left\{ 0...\,d-1\right\} \,,\label{eq:hadamard matrices defn}
\end{equation}
define a\emph{ complex-valued Hadamard matrix}. These are unitary
matrices since their columns are given by the components (in the computational
basis ${\cal B}_{0}$) of $d$ orthogonal vectors. When combined with
the computational basis ${\cal B}_{0}$, the states given in Eq. (\ref{eq:MU basis states odd prime d})
form a complete set of MU bases for Hilbert spaces of prime dimensions,
which we will refer to as the \emph{standard }set. In this paper,
all MU bases will be taken from the standard set.

The Hadamard matrix $H_{1}$ in (\ref{eq:hadamard matrices defn})
coincides with the discrete Fourier matrix $F$ given in Eq. (\ref{eq: Fourier matrix}).
The remaining Hadamard matrices $H_{j}$ map the computational basis
${\cal B}_{0}$ of ${\cal H}_{d}$ to other orthonormal bases denoted
by ${\cal B}_{j}$. Adopting an active view of these transformations,
the state $\psi$ is mapped to the state $H_{j}^{\dagger}\psi$. The
relation between the supports of the state $\psi$ in $\mathcal{B}_{0}$
and the $j$-th MU basis $\mathcal{B}_{j}$ reads,
\begin{equation}
||H_{j}^{\dagger}\psi||_{0}=||\psi||_{j}\,,\label{eq: support notation}
\end{equation}
abbreviating the notation introduced in (\ref{eq: Fourier support notation}),
i.e. $||\psi||_{\mathcal{B}_{j}}\equiv||\psi||_{j}$, $j\in\{0\ldots d\}$.

The columns of the $d$ Hadamard matrices $H_{j}$ in (\ref{eq:hadamard matrices defn})
are related in a simple way to each other, namely by 
\begin{equation}
\ket{\phi_{k}^{j}}=D^{j-1}B^{k}\ket{\phi_{0}^{1}}\,,\qquad j\in\left\{ 1...\,d\right\} ,\,k\in\left\{ 0...\,d-1\right\} ,\label{eq: relating columns of H_j}
\end{equation}
with two diagonal $(d\times d)$ matrices $B$ and $D$; in other
words, all states of the complete set of MU bases can be generated
easily from any given state such as $\ket{\phi_{0}^{1}}$---except
for the states of the computational basis ${\cal B}_{0}$. Within
each Hadamard matrix, the matrix $B$ cyclically shifts a given column
to the right, 
\begin{equation}
B\ket{\phi_{k}^{j}}=\begin{cases}
\ket{\phi_{k+1}^{j}}\,, & k=0,\ldots,\,d-2\,,\\
\ket{\phi_{0}^{j}}\,, & k=d-1\,;
\end{cases}\label{eq:B matrix acting on states}
\end{equation}
its entries are given by the components of the second column of the
Fourier matrix $F=H_{1}$, 
\begin{equation}
B=\text{diag}\left(1,\omega^{-1},\ldots,\omega^{-(d-1)}\right)\,,\label{eq: def of B}
\end{equation}
except for the factor $\sqrt{d}$.

The matrix $D$ is given by the components of the first column of
the second Hadamard matrix $H_{2}$, i.e.
\begin{equation}
D=\text{diag}\left(1,\omega^{1},\ldots,\omega^{(d-1)^{2}}\right)\,,\label{eq: def of D}
\end{equation}
cyclically mapping a state of the $j$-th MU basis to the corresponding
one of the MU basis with label $(j+1)$,
\begin{equation}
D\ket{\phi_{k}^{j}}=\begin{cases}
\ket{\phi_{k}^{j+1}}\,, & j=1,\ldots,\,d-1\,,\\
\ket{\phi_{k}^{1}}\,, & j=d\,.
\end{cases}\label{eq:D matrix acting on states}
\end{equation}
In terms of Hadamard matrices, this property reads
\begin{equation}
DH_{j}=\begin{cases}
H_{j+1}\,, & j=1,\ldots,\,d-1\,,\\
H_{1}\,, & j=d\,.
\end{cases}\label{eq:D matrix acting on Hadamard}
\end{equation}

Writing $H_{j}=D^{j-1}H_{1}\equiv D^{j-1}F$, Chebotarëv's\emph{ }theorem
is seen to imply that the minors of all Hadamard matrices $H_{j}$,
$j=1\ldots d$, are non-zero: the ranks of the minors of $F$ do not
change upon multiplying their rows with non-zero scalars. In view
of Eq. (\ref{eq:Hadamard matrices in d=00003D2}), this generalisation
is also valid for the case of $d=2$.

\section{Support inequalities$\ldots$ \label{sec:The-extended-support-inequality}}

How well can one localise quantum states simultaneously in $(d+1)$
MU bases? To answer this question, we need to minimise the support
sizes of a state relative to the complete set. In a first step, we
now show that the support sizes of a quantum state relative to any
\emph{pair} of standard MU bases also satisfy Tao's bound (\ref{eq:Sum UR (Tao) - Introduction}).
Second, by combining the resulting pair inequalities, we establish
a state-independent lower bound.

\subsection{$\ldots$ for arbitrary pairs of MU bases\label{subsec:Support-inequality-MU-pairs}}

Tao's result establishes---for a space of prime dimension $d$ supporting
a cyclic abelian group---a sharp inequality for the support sizes
of a quantum state and its Fourier transform. Most pairs of the MU
bases introduced in Eq. (\ref{eq:MU basis states odd prime d}) are,
however, not related by a Fourier transform. Nevertheless, Tao's bound
also holds for the supports of the images of any quantum state generated
by two Hadamard matrices as we will show now.
\begin{thm}
\label{thm: pair sums}Given any pair of distinct standard MU bases
associated with matrices $H_{j}$ and $H_{k}$, $j,k\in\left\{ 0\ldots\,d\right\} $,
the support sizes of a quantum state $\ket{\psi}\in{\cal H}_{d}$
satisfy the state-independent sharp bound, 
\begin{equation}
\norm{\psi}_{j}+\norm{\psi}_{k}\geq d+1\,,\label{eq: all pair sum URs}
\end{equation}
where $d$ is any prime number.
\end{thm}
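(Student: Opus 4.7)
The plan is to reduce the bound to Tao's theorem applied to the \emph{transition matrix} between the two bases. By (\ref{eq: support notation}), setting $\phi = H_k^{\dagger}\psi$ turns $\|\psi\|_j + \|\psi\|_k$ into $\|U_{jk}\phi\|_0 + \|\phi\|_0$, where $U_{jk} = H_j^{\dagger}H_k$. Hence it suffices to prove a Tao-type bound $\|\phi\|_0 + \|U_{jk}\phi\|_0 \geq d+1$ for every non-zero $\phi$. I would then observe that Tao's argument (via Theorem \ref{thm:Chebotarev's theorem}) is a purely linear-algebraic statement: it requires only that every square submatrix of the transition matrix be invertible. Indeed, if $\|\phi\|_0 = a$ and $\|U_{jk}\phi\|_0 = b$ with $a+b \leq d$, then $\phi$ vanishes outside some set $S$ with $|S|=a$ and $U_{jk}\phi$ vanishes on some set $T^{c}$ with $|T^{c}| = d-b \geq a$; restricting $U_{jk}$ to these rows and columns yields a $(d-b)\times a$ matrix annihilating $\phi|_{S}\neq 0$, contradicting the non-vanishing of any $a\times a$ minor.

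The main step, then, is to verify the Chebotar\"ev-type property for each $U_{jk}$. When $j=0$ or $k=0$, $U_{jk}$ is (up to adjoint) one of the Hadamard matrices $H_{\ell}$, for which this was already noted at the end of Sec. \ref{sec:Mutually-unbiased-bases} via $H_{\ell} = D^{\ell-1}F$ and the diagonal invariance of minor non-vanishing. For $j,k \in \{1,\ldots,d\}$ with $d$ an odd prime, I would compute directly
\begin{equation}
[U_{jk}]_{xy} = [F^{\dagger} D^{k-j} F]_{xy} = \frac{1}{d}\sum_{z=0}^{d-1}\omega^{(k-j)z^{2}+(x-y)z}.
\end{equation}
Setting $m = k-j\not\equiv 0 \pmod d$ and completing the square in $\mathbb{Z}/d\mathbb{Z}$ (possible because $2m$ is invertible for odd prime $d$) gives
\begin{equation}
[U_{jk}]_{xy} = \frac{G(m)}{d}\,\omega^{-x^{2}(4m)^{-1}}\,\omega^{xy(2m)^{-1}}\,\omega^{-y^{2}(4m)^{-1}},
\end{equation}
where $G(m)=\sum_{z}\omega^{mz^{2}}$ is a non-zero Gauss sum. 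Thus $U_{jk} = c\,E\,\widetilde{F}\,E$, with $E$ a unitary diagonal matrix, $c\neq 0$, and $\widetilde{F}_{xy}=\omega^{(2m)^{-1}xy}$ a Fourier matrix built from the primitive $d$-th root $\omega^{(2m)^{-1}}$. Applying Chebotar\"ev's theorem to $\widetilde{F}$ and using that multiplication by invertible diagonal matrices scales each minor by a non-zero factor yields the desired non-vanishing of all minors of $U_{jk}$. The case $d=2$ can be checked directly from the three $2\times 2$ transition matrices between $H_{0}, H_{1}, H_{2}$ in (\ref{eq:Hadamard matrices in d=00003D2}).

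Sharpness is immediate: any state of the basis $\mathcal{B}_{j}$ has support $1$ in $\mathcal{B}_{j}$ and, by mutual unbiasedness, support $d$ in every $\mathcal{B}_{k}$ with $k\neq j$, saturating $d+1$. The main obstacle I anticipate is the square-completion/Gauss-sum bookkeeping; the numerical value of $G(m)$ and of the diagonal entries is irrelevant, but one must keep track of invertibility modulo $d$, which is exactly what forces the restriction to prime $d$ and what makes the case $d=2$ require a separate (trivial) verification.
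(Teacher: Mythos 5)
Your argument is correct, but it reaches the bound by a different route than the paper. The paper keeps Tao's theorem as a black box: it writes $H_k^{\dagger}H_j=M(j,k)H_t^{\dagger}$ with $M(j,k)$ monomial (Lemma \ref{lem: reduction of product of Hadamards}, proved via the same Gauss-sum/complete-the-square computation you perform), uses the support-invariance under monomial matrices, and then falls back on the Fourier-pair case $\normcb{\phi}+\normcb{F^{\dagger}\phi}\geq d+1$. You instead (a) re-derive the additive bound directly from the statement that all square submatrices of the transition matrix are invertible --- your $(d-b)\times a$ annihilation argument is exactly the standard Chebotar\"ev-implies-Tao step and is sound --- and (b) establish that minor property for $U_{jk}=H_j^{\dagger}H_k$ through the explicit factorisation $U_{jk}=c\,E\,\widetilde F\,E$ with $E$ diagonal unitary, rather than through the monomial relation. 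The computational engine (invertibility of $2(k-j)$ and $4(k-j)$ mod $d$, the nonvanishing Gauss sum) is the same, but your packaging yields the paper's Corollary \ref{cor:Extension of Chebo to product matrices} in one stroke and avoids identifying the third Hadamard matrix $H_{1+\chi}$ and the monomial factor; the paper's route buys more, since Lemma \ref{lem: reduction of product of Hadamards} is reused later (e.g.\ in Sec.\ \ref{sec:Saturating-the-inequality}). One small point to make explicit: Theorem \ref{thm:Chebotarev's theorem} as stated concerns the specific matrix $F$ with root $e^{-2\pi i/d}$, whereas your $\widetilde F_{xy}=\omega^{(2m)^{-1}xy}$ uses a different primitive $d$-th root; you should add the one-line observation that $\widetilde F$ is, up to the factor $\sqrt d$, the matrix $F$ with its columns permuted by $y\mapsto -(2m)^{-1}y \bmod d$, so its minors are nonvanishing as well. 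With that remark, and your separate check of $d=2$ and of sharpness (a basis vector of $\mathcal{B}_j$ giving $1+d$), the proof is complete.
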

It is important to realise that Theorem \ref{thm: pair sums} does
not cover \emph{arbitrary} pairs of MU bases in prime dimensions but
only those defined in Eq. (\ref{eq:hadamard matrices defn}). Nevertheless,
\textit{all }\textit{\emph{pairs}} of MU bases in dimensions $d=2,3$
and $5$ are found to be completely incompatible since in these dimensions
\emph{all }Hadamard matrices are equivalent to the Fourier matrix.
Already for the next prime, $d=7$, other types of Hadamard matrices
exist \cite{bruzda_catalogue_2006}.
\begin{proof}
The case of dimension $d=2$ is straightforward. If a state $\ket{\psi}\in{\cal H}_{2}$
has support one in one MU basis, it must have support two in both
other bases, due to being MU to their members. Thus, the sum of the
supports of any state in two bases must be at least three.

For odd primes $d,$we will consider two cases separately: either
\emph{$(i)$} one of the bases in Eq. (\ref{eq: all pair sum URs})
is the computational basis, so that $j=0$, say, or \emph{$(ii)$}
neither of them. 

$(i)$ Defining the vector $\phi=D^{1-k}\psi$, we obtain
\begin{equation}
\norm{\psi}_{0}+\norm{\psi}_{k}=\norm{\psi}_{0}+\norm{H_{k}^{\dagger}\psi}_{0}=\norm{\psi}_{0}+\norm{F^{\dagger}D^{1-k}\psi}_{0}=\norm{D^{k-1}\phi}_{0}+\normcb{F^{\dagger}\phi}\,,\label{eq:new Lemma 4 - part 1}
\end{equation}
recalling that $H_{k}=D^{k-1}F$ holds according to Eq. (\ref{eq:D matrix acting on Hadamard}).
Since $D$ is a diagonal unitary hence a support-conserving unitary
matrix (cf. (\ref{eq: phase symmetry})), we obtain
\begin{equation}
\norm{\psi}_{0}+\norm{\psi}_{k}=\norm{\phi}_{0}+\normcb{F^{\dagger}\phi}\geq d+1\,,\label{eq:new Lemma 4 - part 2}
\end{equation}
where Tao's theorem was used in the last step.

$(ii)$ Now consider the case where the non-zero labels $j$ and $k$
differ from each other. Defining the vector $\phi=H_{j}^{\dagger}\psi$,
the sum of the support sizes can be written as
\begin{equation}
\norm{\psi}_{j}+\norm{\psi}_{k}=\normcb{H_{j}^{\dagger}\psi}+\norm{H_{k}^{\dagger}\psi}_{0}=\normcb{\phi}+\normcb{H_{k}^{\dagger}H_{j}\phi}\,.\label{eq:new Lemma 4 - part 3}
\end{equation}
The product $H_{k}^{\dagger}H_{j}$ of two distinct Hadamard matrices
is, in fact, always equal to another Hadamard matrix $H_{t}^{\dagger}$,
$t\neq0$, up to a monomial matrix $M(j,k)$; this result is the content
of Lemma \ref{lem: reduction of product of Hadamards} stated directly
after the proof. As the matrix $M(j,k)$ is support-conserving (cf.
Eqs. (\ref{eq: phase symmetry}) and (\ref{eq: perm symmetry})) for
all states of the space $\mathcal{H}_{d}$, we find 
\begin{equation}
\normcb{\phi}+\normcb{M(j,k)H_{t}^{\dagger}\phi}=\normcb{\phi}+\normcb{H_{t}^{\dagger}\phi}\geq d+1\,,\label{eq:new Lemma 4 - part 4}
\end{equation}
where (\ref{eq:new Lemma 4 - part 2}) of Part $(i)$ has been used
in the final step.
\end{proof}
The proof just relies on dissolving products of the Hadamard matrices
$H_{j}$ which encode a complete set of MU bases. Clearly, products
of the form $H_{k}^{\dagger}H_{j}$, $j\neq k$, are Hadamard matrices
since their matrix elements, being overlaps of MU vectors, have modulus
$1/\sqrt{d}$,
\begin{equation}
\left[H_{k}^{\dagger}H_{j}\right]_{\ell\ell^{\prime}}=\braket{\phi_{\ell}^{k}}{\phi_{\ell^{\prime}}^{j}}\,.\label{eq:product of Hadamard matrices -- elements}
\end{equation}
When $d=2$, one finds explicitly that $H_{1}^{\dagger}H_{2}=MH_{2}^{\dagger}$
and $H_{2}^{\dagger}H_{1}=M^{\prime}H_{2}^{\dagger}$, with monomial
matrices $M$ and $M^{\prime}$. In other words, the phases of the
matrix elements (\ref{eq:product of Hadamard matrices -- elements})
coincide with those of the adjoint of another transition matrix after
permuting and rephasing its rows. This property actually holds for
any prime dimension.

\begin{restatable}{lem}{productHadamardlemma}

\label{lem: reduction of product of Hadamards}Let $d$ be an odd
prime and $j,k\in\left\{ 1\ldots\,d\right\} $ with $j\neq k$. Then
\begin{equation}
H_{k}^{\dagger}H_{j}=M(j,k)H_{t}^{\dagger}\label{eq:new lemma (j neq k, both nonzero)}
\end{equation}
for a monomial matrix $M(j,k)$ if and only if $t=1+\chi\in\left\{ 1,\ldots,d\right\} $
where the integer $\chi$ satisfies $4\left(j-k\right)\chi=1\mod d$.

\end{restatable}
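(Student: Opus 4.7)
My plan is to compute the matrix elements of $H_k^{\dagger}H_j$ directly from (\ref{eq:hadamard matrices defn}) and compare them with those of $MH_t^{\dagger}$ for a general monomial matrix $M$, reading the admissible value of $t$ off the resulting expression.

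First I would substitute the definitions to obtain
\begin{equation*}
[H_k^{\dagger}H_j]_{\ell\ell^{\prime}}=\frac{1}{d}\sum_{x=0}^{d-1}\omega^{(j-k)x^{2}+(\ell-\ell^{\prime})x},
\end{equation*}
a quadratic Gauss sum in $x$. Because $d$ is an odd prime and $j\neq k$, both $2$ and $(j-k)$ are invertible modulo $d$, hence so is $4(j-k)$; let $\chi\equiv(4(j-k))^{-1}\pmod{d}$. Completing the square in the exponent and translating the summation variable by $(2(j-k))^{-1}(\ell-\ell^{\prime})$ (a bijection of $\mathbb{Z}/d\mathbb{Z}$) decouples $\ell-\ell^{\prime}$ from the remaining pure Gaussian sum, giving
\begin{equation*}
[H_k^{\dagger}H_j]_{\ell\ell^{\prime}}=\frac{G(j-k)}{d}\,\omega^{-\chi(\ell-\ell^{\prime})^{2}},\qquad G(a)=\sum_{y=0}^{d-1}\omega^{ay^{2}},
\end{equation*}
a classical Gauss sum of modulus $|G(a)|=\sqrt{d}$ for $a\not\equiv 0\pmod{d}$.

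Next I would write the proposed right-hand side as $MH_t^{\dagger}=RP_{\pi}H_t^{\dagger}$, where $R$ is a diagonal phase matrix and $P_{\pi}$ a permutation matrix. Using (\ref{eq:hadamard matrices defn}) its entries take the simple form
\begin{equation*}
[MH_t^{\dagger}]_{\ell\ell^{\prime}}=\frac{R_{\ell\ell}}{\sqrt{d}}\,\omega^{\pi(\ell)\ell^{\prime}-(t-1)(\ell^{\prime})^{2}}.
\end{equation*}
Expanding $(\ell-\ell^{\prime})^{2}=\ell^{2}-2\ell\ell^{\prime}+(\ell^{\prime})^{2}$ in the formula above and matching the coefficients of $(\ell^{\prime})^{2}$, $\ell^{\prime}$, and $1$ (viewed as polynomials in $\ell^{\prime}$ modulo $d$) pins down all three data uniquely. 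The $(\ell^{\prime})^{2}$-coefficient forces $t-1\equiv\chi\pmod{d}$, that is, $t=1+\chi$ with $4(j-k)\chi\equiv 1\pmod{d}$; the $\ell^{\prime}$-coefficient fixes the permutation $\pi(\ell)\equiv 2\chi\ell\pmod{d}$, which is a genuine bijection of $\{0,\ldots,d-1\}$ because $2\chi\not\equiv 0$; and the $\ell^{\prime}$-independent factor fixes the diagonal phases $R_{\ell\ell}=(G(j-k)/\sqrt{d})\,\omega^{-\chi\ell^{2}}$, which all have modulus one. This simultaneously exhibits the monomial matrix $M(j,k)$ (the ``if'' direction) and shows that the $(\ell^{\prime})^{2}$-dependence of the left-hand side leaves no other choice of $t$ (the ``only if'' direction).

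The only delicate step is the Gauss-sum manipulation in the first paragraph: one must check that completing the square and shifting the summation index are legitimate operations in $\mathbb{Z}/d\mathbb{Z}$, which rests crucially on $d$ being an odd prime so that $2$, and hence $4(j-k)$, is invertible. Everything beyond that is coefficient bookkeeping.
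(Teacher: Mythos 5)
Your proof is correct, and its core is the same computation as the paper's: evaluating $[H_k^{\dagger}H_j]_{\ell\ell^{\prime}}$ as a quadratic Gauss sum and completing the square (legitimate because $2$ and $j-k$ are invertible mod the odd prime $d$) to extract the phase $\omega^{-\chi(\ell-\ell^{\prime})^2}$. Where you diverge is in how the ``iff'' is settled: the paper multiplies on the right by $H_t$ and shows that $V=H_k^{\dagger}H_jH_t$ is monomial exactly when $t=1+\chi$, which requires evaluating a \emph{second} generalised Gauss sum (and the closed form with the Jacobi symbol and $\varepsilon_d$) to verify that all entries of $V$ are non-zero when $t\neq1+\chi$; you instead equate the entries of $H_k^{\dagger}H_j$ directly with those of a general $RP_{\pi}H_t^{\dagger}$ and match the exponents as polynomials in $\ell^{\prime}$ over $\mathbb{Z}/d\mathbb{Z}$, which yields $t-1\equiv\chi$, $\pi(\ell)\equiv2\chi\ell$ and the phases in one stroke and needs only $|G(j-k)|=\sqrt{d}$ rather than the exact Gauss-sum value. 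This is a cleaner treatment of the ``only if'' direction; the one step worth spelling out is why matching coefficients is allowed: from $\omega^{p(\ell^{\prime})}=c\,\omega^{q(\ell^{\prime})}$ for all $\ell^{\prime}$ one gets that $p-q$ is constant modulo $d$, and since $p-q$ has degree at most $2<d$ and vanishes (after subtracting its value at $0$) at all $d$ points of the field $\mathbb{Z}/d\mathbb{Z}$, it is the zero polynomial---this is where primality and $d\geq3$ enter a second time. With that remark added, your argument is complete and slightly leaner than the paper's.
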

\begin{proof}
See \nameref{Appendix}.
\end{proof}
Furthermore, Lemma \ref{lem: reduction of product of Hadamards} allows
us to generalise Chebotarëv's theorem to the product matrices $H_{k}^{\dagger}H_{j}$,
for distinct indices $j$ and $k$.

\begin{restatable}[]{cor}{CheboThmHadamard}

\label{cor:Extension of Chebo to product matrices}If $d$ is prime,
then all minors of the Hadamard matrices $H_{k}^{\dagger}H_{j}$,
$j,k\in\{0\ldots d\}$ and $j\neq k$, are non-zero.

\end{restatable}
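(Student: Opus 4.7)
The plan is to reduce the claim to Chebotarëv's theorem (which directly handles $F=H_1$) by combining two observations. First, left-multiplication by a monomial matrix preserves the vanishing and non-vanishing of minors. Second, by Lemma~\ref{lem: reduction of product of Hadamards}, every product $H_k^\dagger H_j$ with $1\le j\ne k\le d$ differs from some single standard Hadamard $H_t^\dagger$ by such a monomial factor. Together these let one transport Chebotarëv's result from $F$ to every relevant product matrix.

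I would first dispatch the cases in which one of the indices equals $0$. Since $H_0=I$, the matrices $H_k^\dagger H_0 = H_k^\dagger$ and $H_0^\dagger H_j = H_j$ are themselves standard Hadamards (or their adjoints). The excerpt already remarks that Chebotarëv's theorem combined with the factorisation $H_j = D^{j-1}F$ from (\ref{eq:D matrix acting on Hadamard}) implies that all minors of every $H_j$ are non-zero, since scaling the rows of $F$ by the non-zero diagonal entries of $D^{j-1}$ only multiplies each minor by a non-zero constant; the same then holds for the adjoints, whose minors are complex conjugates of minors of $H_k$ with rows and columns interchanged.

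For the principal case $1\le j\ne k\le d$, I would invoke Lemma~\ref{lem: reduction of product of Hadamards} to write $H_k^\dagger H_j = M(j,k)H_t^\dagger$ with $M(j,k)$ monomial and $t\in\{1,\ldots,d\}$. Any square submatrix of $M(j,k)H_t^\dagger$ indexed by rows $R$ and columns $C$ is the submatrix of $H_t^\dagger$ indexed by $\pi^{-1}(R)$ and $C$ (where $\pi$ is the permutation underlying $M(j,k)$), with each row rescaled by a non-zero phase drawn from the diagonal part of $M(j,k)$. Its determinant is therefore a non-zero scalar times a minor of $H_t^\dagger$, which is non-zero by the previous paragraph. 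The $d=2$ instance, where a separate construction is used, is covered by the explicit factorisations $H_1^\dagger H_2 = MH_2^\dagger$ and $H_2^\dagger H_1 = M'H_2^\dagger$ noted just above the lemma, and the same argument then applies.

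The only real obstacle is the small bookkeeping step of verifying that every minor of $MA$ (with $M$ monomial) is a non-zero scalar multiple of a row-relabelled minor of $A$, as opposed to merely checking that $MA$ and $A$ have equal rank; this is an elementary consequence of writing $M=DP$ with $D$ an invertible diagonal matrix and $P$ a permutation matrix. Once this fact is made explicit, the corollary follows at once from Lemma~\ref{lem: reduction of product of Hadamards} together with Chebotarëv's theorem.
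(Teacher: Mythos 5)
Your proposal is correct and follows essentially the same route as the paper: handle the $j=0$ or $k=0$ cases via Chebotarëv's theorem and the factorisation $H_t=D^{t-1}F$, then use Lemma \ref{lem: reduction of product of Hadamards} plus the fact that a monomial left factor only permutes and rescales rows (hence cannot create vanishing minors), with $d=2$ settled by the explicit products $H_1^\dagger H_2=MH_2^\dagger$ and $H_2^\dagger H_1=M'H_2^\dagger$. Your explicit bookkeeping that each minor of $M(j,k)H_t^\dagger$ equals a non-zero scalar times a row-relabelled minor of $H_t^\dagger$ is just a slightly more careful statement of the paper's remark that rephasing and permuting rows do not affect the (non-)vanishing of submatrix determinants.
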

\begin{proof}
Let $d$ be an odd prime. Any $H_{t}$, $t\in\{1\ldots d\}$, has
only non-zero minors, as was mentioned after Eq. (\ref{eq:D matrix acting on Hadamard}),
as do the adjoints $H_{t}^{\dagger}$. Therefore, the claim holds
if one of the labels $j,k$, is zero. For both $j,k\neq0$, Lemma
\ref{lem: reduction of product of Hadamards} applies. Since rephasing
and permuting the rows of a matrix do not change the rank of any submatrix,
we can conclude that the matrices $H_{k}^{\dagger}H_{j}$, $j,k\neq0$
and $j\neq k$ also have non-zero minors only.

In dimension $d=2$, the result follows from inspecting the products
$H_{1}^{\dagger}H_{2}=MH_{2}^{\dagger}$ and $H_{2}^{\dagger}H_{1}=M^{\prime}H_{2}^{\dagger}$.
\end{proof}
According to Corollary \ref{cor:Extension of Chebo to product matrices}
the vectors formed by the columns (or rows) of all square submatrices
of the Hadamard matrices $H_{k}^{\dagger}H_{j}$, $j\neq k$, are
linearly independent. What is more, up to $d$ vectors taken from
\textit{any} \textit{two} MU bases are linearly independent.

\begin{restatable}[]{cor}{linearindepcorollary}

\label{cor:Linear independence of d vectors from ANY two MUBs}Given
a complete standard set of MU bases in the space $\mathcal{H}_{d}$
of prime dimension $d$, up to $d$ vectors taken from any two MU
bases are linearly independent.

\end{restatable}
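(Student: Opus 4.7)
The plan is to reduce the linear independence claim to the nonvanishing of a suitable minor of the transition matrix between the two chosen MU bases, and then invoke Corollary \ref{cor:Extension of Chebo to product matrices}.

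First I fix two distinct standard MU bases $\mathcal{B}_j$ and $\mathcal{B}_k$, with $j\neq k$, and pick $a$ vectors $\ket{\phi_{i_1}^{j}},\ldots,\ket{\phi_{i_a}^{j}}$ from $\mathcal{B}_j$ together with $b$ vectors $\ket{\phi_{l_1}^{k}},\ldots,\ket{\phi_{l_b}^{k}}$ from $\mathcal{B}_k$, subject to $a+b\leq d$. The degenerate cases $a=0$ or $b=0$ are trivial, since vectors from a single orthonormal basis are automatically linearly independent, so I may assume $a,b\geq 1$.

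Next I express every chosen vector in the $\mathcal{B}_j$-basis: the vectors drawn from $\mathcal{B}_j$ become the standard basis vectors of $\mathbb{C}^d$ indexed by $i_1,\ldots,i_a$, while those drawn from $\mathcal{B}_k$ become the columns indexed by $l_1,\ldots,l_b$ of the transition matrix $U\equiv H_j^{\dagger}H_k$. Assuming a vanishing linear combination
\begin{equation}
\sum_{s=1}^{a}\alpha_{s}\ket{\phi_{i_{s}}^{j}} + \sum_{r=1}^{b}\beta_{r}\ket{\phi_{l_{r}}^{k}} = 0\,,
\end{equation}
its $m$-th component in the $\mathcal{B}_j$-basis, for any index $m$ outside $\{i_1,\ldots,i_a\}$, reduces to $\sum_{r=1}^{b}\beta_{r}U_{m,l_r}=0$. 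The complement of $\{i_1,\ldots,i_a\}$ has cardinality $d-a\geq b$, so I may select any $b$ such row indices $m$ to obtain a homogeneous $b\times b$ linear system for the coefficients $\beta_r$.

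The coefficient matrix of this system is precisely a $b\times b$ submatrix of $U=H_j^{\dagger}H_k$, which by Corollary \ref{cor:Extension of Chebo to product matrices} has nonzero determinant. Hence $\beta_{1}=\cdots=\beta_{b}=0$, and substituting back into the combination leaves $\sum_{s}\alpha_{s}\ket{\phi_{i_s}^{j}}=0$, so orthonormality of $\mathcal{B}_j$ forces $\alpha_{1}=\cdots=\alpha_{a}=0$. The only substantial ingredient is the generalised Chebotarëv statement of Corollary \ref{cor:Extension of Chebo to product matrices}, which is already in hand; the remaining work is pure bookkeeping, and the only mild obstacle is ensuring that the complementary row set contains at least $b$ indices, which is automatic from the hypothesis $a+b\leq d$.
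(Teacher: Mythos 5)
Your proof is correct and follows essentially the same route as the paper: both arguments pass to the frame of one of the two bases (i.e.\ apply $H_{j}^{\dagger}$), observe that linear dependence would force a vanishing $b\times b$ minor of the transition matrix $H_{j}^{\dagger}H_{k}$, and rule this out by Corollary \ref{cor:Extension of Chebo to product matrices}. The only difference is presentational --- you solve an explicit homogeneous system for the coefficients, whereas the paper phrases the same reduction as a rank argument on a block matrix --- so no further comparison is needed.
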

\begin{proof}
See \nameref{Appendix}.
\end{proof}
Theorem \ref{thm: pair sums} also demonstrates that all pairs of
MU bases taken from the complete standard set in prime dimension are
\textit{completely}\textit{\emph{ }}\textit{incompatible}\textit{\emph{,
in the sense of Sec.}}\textit{ }\ref{subsec: Support-inequality-Fourier-pairs}\textit{.}
This statement is stronger than the results of \cite{tao_uncertainty_2005,de_bievre_complete_2021}
since the bases we consider are not necessarily related by the Fourier
matrix $F$.

\subsection{$\ldots$ for complete sets of $(d+1)$ MU bases}

Let us denote the sum of the numbers of non-zero expansion coefficients
of a state $\psi\in\mathcal{H}_{d}$ in a complete standard set of
MU bases by

\begin{equation}
{\cal S}(d)=||\psi||_{0}+||\psi||_{1}+\dots+||\psi||_{d}\,.\label{eq:define overall support S(d)}
\end{equation}
Then, the inequalities (\ref{eq: all pair sum URs}) imply that the
overall support size $\mathcal{S}(d)$ cannot fall below a certain
threshold. This is a central result of our paper.
\begin{thm}
\label{thm:multi-bound}For any prime $d$, the overall support ${\cal S}(d)$
of a quantum state $\ket{\psi}\in{\cal H}_{d}$ in a complete standard
set of MU bases satisfies the additive state-independent bound
\begin{equation}
{\cal S}(d)\geq\frac{\left(d+1\right)^{2}}{2}\equiv T(d)\,.\label{eq:extended sum UR}
\end{equation}
\end{thm}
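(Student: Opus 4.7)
The proof is a straightforward averaging argument built directly on Theorem \ref{thm: pair sums}. The plan is to sum the pairwise bound $\|\psi\|_j + \|\psi\|_k \geq d+1$ over every unordered pair of distinct MU bases from the complete standard set, then observe that each individual support size contributes a fixed number of times on the left-hand side.

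Concretely, I would first fix the complete standard set $\{\mathcal{B}_0,\mathcal{B}_1,\ldots,\mathcal{B}_d\}$ and enumerate all $\binom{d+1}{2}$ unordered pairs $\{j,k\}$ with $0\leq j<k\leq d$. By Theorem \ref{thm: pair sums}, each such pair contributes the inequality $\|\psi\|_j+\|\psi\|_k\geq d+1$. Summing these $\binom{d+1}{2}=d(d+1)/2$ inequalities yields
\begin{equation}
\sum_{0\leq j<k\leq d}\bigl(\|\psi\|_j+\|\psi\|_k\bigr)\;\geq\;\frac{d(d+1)}{2}\,(d+1)\,=\,\frac{d(d+1)^2}{2}\,.
\end{equation}

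Next, I would compute the left-hand side by a double-counting argument: each fixed index $j\in\{0,\ldots,d\}$ appears paired with exactly $d$ other indices, so $\|\psi\|_j$ is counted $d$ times in the sum. Thus the left-hand side equals $d\,\mathcal{S}(d)$, and dividing through by $d$ gives $\mathcal{S}(d)\geq (d+1)^2/2=T(d)$, which is precisely the claim.

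There is no real obstacle here — the entire content of the theorem lives in Theorem \ref{thm: pair sums}, and the passage to the multi-basis statement is pure counting. The only thing worth remarking on is that the bound need not be an integer (for $d=2$ one gets $9/2$), so sharpness of this inequality at the integer level will require a ceiling and is addressed separately in later sections; for the present statement the rational bound is exactly what the averaging argument delivers.
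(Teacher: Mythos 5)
Your proof is correct and is essentially the paper's own argument: the paper's main proof adds the $(d+1)$ cyclically adjacent pair inequalities $(j,j+1)$ from Theorem \ref{thm: pair sums} so that each support size appears exactly twice, while your all-pairs averaging (each $\norm{\psi}_{j}$ counted $d$ times, then dividing by $d$) is precisely the ``alternative proof'' the paper mentions immediately after. Both are the same counting argument and deliver the same bound $T(d)=(d+1)^{2}/2$, so nothing is missing.
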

\begin{proof}
Write down $(d+1)$ copies of the support inequality (\ref{eq: all pair sum URs})
with indices $(j,j+1)$, $j=0,\ldots\,d-1$, and $(d,0)$, respectively.
Adding them up, the right-hand-sides of (\ref{eq: all pair sum URs})
give $(d+1)^{2}$, and since each term $||\psi||_{j}\equiv\norm{H_{j}^{\dagger}\psi}_{0}$,
$j=0,\ldots,\,d$, occurs twice on the left, one obtains the inequality
(\ref{eq:extended sum UR}).
\end{proof}
An alternative proof treats all pair supports equally: write down
(\ref{eq: all pair sum URs}) for all $d(d+1)$ distinct pairs of
indices $(j,k)$ and consider the sum of the supports. After removing
common factors, the bound (\ref{eq:extended sum UR}) on ${\cal S}(d)$
follows.

Support inequalities other than Eqs. (\ref{eq: all pair sum URs})
and (\ref{eq:extended sum UR}) exist. They may involve any number
between two and $(d+1)$ MU bases. For example, picking the first
three MU bases and combining the associated pair inequalities from
(\ref{eq: all pair sum URs}) leads to the additive ``triple support
inequality'',
\begin{equation}
{\cal S}(d;3)\equiv\norm{\psi}_{0}+\norm{F^{\dagger}\psi}_{0}+\norm{H_{2}^{\dagger}\psi}_{0}\geq\frac{3}{2}(d+1)\,.\label{eq: triple sum inequality}
\end{equation}
Clearly, this inequality cannot be saturated for dimension $d=2$
because the overall support ${\cal S}(2)$ of a state is always an
integer number. Taking only two possible values, the smallest achievable
value of the triple support size ${\cal S}(2;3)\equiv{\cal S}(2)$
equals $T_{s}(2)=5$; here and in the following, \emph{achievable}---or
\emph{sharp}---bounds of $\mathcal{S}(d)$ are denoted by $T_{s}(d)$.

The lower bound on the triple uncertainty relation for continuous
variables \cite{kechrimparis_heisenberg_2014}, derived similarly
by combining pair uncertainty relations, can also not be reached.
Theorem \ref{thm:multi-bound} is not constructive, hence it is not
obvious whether the case of $d=2$ represents an exception or whether
the inequalities (\ref{eq:extended sum UR}) are never sharp. In the
next section we will first derive some general results about multiple-support
inequalities, followed by a closer look at dimensions $3\leq d\leq19$.

\section{Saturating support inequalities for MU bases \label{sec:Saturating-the-inequality}}

To saturate the bound of the inequality (\ref{eq:extended sum UR})
means to identify states that minimise all support pair relations
simultaneously. We present a number of rigorous results for prime
dimensions $d\leq7$. Numerical methods are then used to determine
whether the generalised inequality can be saturated for dimensions
up to $d=19$.

\subsection{Constraints on saturating states \label{subsec:Results for any d}}

Our first general result is a necessary and sufficient condition that
the support inequality (\ref{eq:extended sum UR}) involving a complete
set of $(d+1)$ MU bases be saturated.
\begin{thm}[Equal support sizes]
\label{thm: Condition on 0-norms for saturation}The additive support
inequality for a complete standard set of MU bases (\ref{eq:extended sum UR})
is saturated by a state $\psi\in{\cal H}_{d}$ if and only if it has
the same support in all $(d+1)$ MU bases, i.e.
\begin{equation}
||\psi||_{j}=\frac{d+1}{2}\,,\qquad j\in\left\{ 0...\,d\right\} \,,\label{eq: saturation by equi-support states}
\end{equation}
where $d$ is an odd prime.
\end{thm}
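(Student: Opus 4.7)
My plan is to prove the biconditional in two directions, leveraging the pair inequality from Theorem \ref{thm: pair sums} and the counting argument underlying Theorem \ref{thm:multi-bound}.

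For the ``if'' direction, if $\|\psi\|_j = (d+1)/2$ for every $j \in \{0,\ldots,d\}$, then summing over the $(d+1)$ bases immediately gives
$$\mathcal{S}(d) = (d+1)\cdot\frac{d+1}{2} = \frac{(d+1)^2}{2} = T(d),$$
so the bound is saturated. Note that for $d$ an odd prime $(d+1)/2$ is an integer, so this value is admissible as a support size.

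For the ``only if'' direction, I will use the alternative proof of Theorem \ref{thm:multi-bound} that sums the pair inequalities over \emph{all} $\binom{d+1}{2} = d(d+1)/2$ unordered pairs $\{j,k\}$. Since each index $j$ appears in exactly $d$ such pairs, one obtains
$$d\cdot\mathcal{S}(d) \;=\; \sum_{\{j,k\}}\bigl(\|\psi\|_j + \|\psi\|_k\bigr) \;\geq\; \binom{d+1}{2}(d+1) \;=\; \frac{d(d+1)^2}{2}.$$
Assuming now that $\mathcal{S}(d) = (d+1)^2/2$, this chain of inequalities must collapse to equalities, which forces every individual pair inequality to be saturated:
$$\|\psi\|_j + \|\psi\|_k = d+1 \quad \text{for all } j\neq k.$$

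The final step is to deduce from this system of equalities that all $\|\psi\|_j$ coincide. Fixing $j$ and comparing the equations for two other distinct indices $k,\ell$ gives $\|\psi\|_k = \|\psi\|_\ell$; varying $j$ shows all supports share a common value $s$. The relation $2s = d+1$ then yields $s = (d+1)/2$, completing the proof.

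There is no real obstacle here; the argument is essentially a rigidity statement for a system of saturated linear inequalities, and the only subtle point is observing that \emph{equality} in the summed inequality forces equality in each summand. The fact that $(d+1)/2$ is an integer precisely for odd primes is consistent with support sizes being integers, which is why the hypothesis ``$d$ odd prime'' appears in the statement.
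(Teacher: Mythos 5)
Your proof is correct, and it takes a genuinely different route from the paper's. The paper proves the ``only if'' direction by a case analysis on the support size in a single reference basis, writing $\norm{\psi}_{0}=(d+1)/2\pm\Delta$: if the support dips below $(d+1)/2$ in some basis, the pair inequalities force an excess $(d-1)\Delta$ in the total, and the case $\norm{\psi}_{0}>(d+1)/2$ is reduced to the previous one by relabelling; equality is then only possible in the balanced case. You instead sum the pair inequality of Theorem \ref{thm: pair sums} over all $\binom{d+1}{2}$ unordered pairs, note that each index occurs in exactly $d$ pairs so the left side is $d\,\mathcal{S}(d)$, and observe that saturation of $\mathcal{S}(d)=(d+1)^2/2$ makes the summed inequality an equality, which forces every summand $\norm{\psi}_{j}+\norm{\psi}_{k}\geq d+1$ to be tight; comparing the saturated pair equations (using that there are at least three bases) then pins all supports to a common value $(d+1)/2$. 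Both arguments use only Theorem \ref{thm: pair sums}, but yours is a cleaner rigidity argument and yields explicitly the stronger intermediate fact that a saturating state must saturate \emph{every} pair inequality simultaneously, whereas the paper's case analysis yields a quantitative penalty ($\mathcal{S}(d)\geq T(d)+(d-1)\Delta$) for any imbalance in a single basis; your summation step also echoes the ``alternative proof'' the paper sketches for Theorem \ref{thm:multi-bound}, but its use for the converse direction of Theorem \ref{thm: Condition on 0-norms for saturation} is your own and is valid.
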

\begin{proof}
Substituting the values (\ref{eq: saturation by equi-support states})
into (\ref{eq:extended sum UR}) directly produces the lower bound. 

For the converse, we show that the supports must have the values given
in (\ref{eq: saturation by equi-support states}) if equality holds
in Eq. (\ref{eq:extended sum UR}). Noting that the support of any
state $\psi\in{\cal H}_{d}$ ranges from $1$ to $d$, i.e.
\begin{equation}
\norm{\psi}_{0}=\frac{d+1}{2}\pm\Delta\,,\qquad\Delta\in\left\{ 0,1,\ldots,\,\frac{1}{2}(d-1)\right\} \,,\label{eq:saturation by equi-support: part 1}
\end{equation}
we will proceed by exhausting all its values in the computational
basis $\mathcal{B}_{0}$. It turns out that the the minimum in (\ref{eq:extended sum UR})
cannot be reached if the support is either $(i)$ smaller or $(ii)$
larger than $(d+1)/2$, leaving $(iii)$ the values in (\ref{eq: saturation by equi-support states})
as the only option.

$(i)$ If $\norm{\psi}_{0}=\left(d+1\right)/2-\Delta$, $\Delta>0$,
then (\ref{eq: all pair sum URs}) implies that $\norm{\psi}_{j}\geq\left(d+1\right)/2+\Delta$,
$j=\left\{ 1\ldots\,d\right\} $. Hence, the sum of the supports in
all $(d+1)$ MU bases equals 
\begin{equation}
\begin{aligned}\mathcal{S}(d)=\sum_{j=0}^{d}\norm{\psi}_{j} & \geq\frac{d+1}{2}-\Delta+d\left(\frac{d+1}{2}+\Delta\right)\\
 & \geq\frac{(d+1)^{2}}{2}+(d-1)\Delta>\frac{(d+1)^{2}}{2}\,.
\end{aligned}
\label{eq:saturation by equi-support: part 2}
\end{equation}
Therefore, the inequality cannot be saturated by a state which has
support smaller than $\left(d+1\right)/2$ in the basis $\mathcal{B}_{0}$.

$(ii)$ Assume that $\norm{\psi}_{0}=\left(d+1\right)/2+\Delta$,
$\Delta>0$. Clearly, the lower bound of the sum in Eq. (\ref{eq:extended sum UR})
can only be reached if the support of the state $\psi$ is smaller
than $(d+1)/2$ in at least one of the MU bases, $\norm{\psi}_{j^{*}}<\left(d+1\right)/2$,
$j^{*}\in\left\{ 1\ldots\,d\right\} $, say. Repeating the argument
from $(i)$ relative to the MU basis $\mathcal{B}_{j^{*}}$ instead
of $\mathcal{B}_{0}$ implies that the inequality (\ref{eq:extended sum UR})
cannot be saturated. 

$(iii)$ If $\norm{\psi}_{0}=\left(d+1\right)/2$ then (\ref{eq: all pair sum URs})
implies that $\norm{\psi}_{j}\geq\left(d+1\right)/2$, $j=\left\{ 1\ldots\,d\right\} $.
However, given these bounds, the minimum of $\mathcal{S}(d)$ in (\ref{eq:extended sum UR})
can be achieved only if the support of the state $\psi\in{\cal H}_{d}$
takes the value $\left(d+1\right)/2$ in all other MU bases as well. 
\end{proof}
The second general result states that a specific $d$-th root of unity
can appear at most \emph{twice} in the columns of the Hadamard matrices
$H_{j}$, $j=2\ldots d$, given in (\ref{eq:hadamard matrices defn}).
The proof of another necessary--but not sufficient--condition for
saturating the generalised inequality (\ref{eq:extended sum UR})
will rely on this limit of the occurrences of roots. The property
also applies to $H_{1}=F$ where each root appears exactly once in
every column, as is seen by inspecting (\ref{eq: Fourier matrix}).

\begin{restatable}[\textit{Frequency of roots}]{lem}{frequencyofroots}

\label{lem:Frequency of roots} Let $d$ be prime and consider the
states $\ket{\phi_{k}^{j}}$, $j=2,\ldots,\,d$, in Eq.\emph{ }(\ref{eq:MU basis states odd prime d})
forming the bases $\mathcal{B}_{j}$ which are MU to both the identity
and the Fourier matrix. Any $d$-th root $\omega^{n}$, $n\in\{0\ldots d\}$,
figures at most twice among the numbers $\sqrt{d}\langle x\ket{\phi_{k}^{j}}$,
$x\in\left\{ 0\ldots\:d-1\right\} $.

\end{restatable}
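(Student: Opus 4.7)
The plan is to translate the frequency question into counting roots of a quadratic polynomial over the finite field $\mathbb{F}_d$, at which point the bound "at most two" is immediate from the fact that a polynomial of degree $n$ over a field has at most $n$ roots.

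First I would substitute the explicit formula (\ref{eq:hadamard matrices defn}) into the expression under scrutiny, obtaining
\begin{equation*}
\sqrt{d}\,\langle x\mid\phi_{k}^{j}\rangle = \omega^{-kx}\,\omega^{(j-1)x^{2}} = \omega^{(j-1)x^{2}-kx}\,,
\end{equation*}
for $x\in\{0,\ldots,d-1\}$. Since $\omega$ is a primitive $d$-th root of unity, the equality $\omega^{(j-1)x^{2}-kx}=\omega^{n}$ holds if and only if
\begin{equation*}
(j-1)x^{2}-kx \equiv n \pmod{d}\,.
\end{equation*}
Thus the number of occurrences of $\omega^{n}$ in the $k$-th column of $H_{j}$ equals the number of solutions $x\in\{0,\ldots,d-1\}$ of the polynomial congruence $P(x)\equiv 0 \pmod d$, where $P(x)=(j-1)x^{2}-kx-n$.

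Next I would observe that the hypothesis $j\in\{2,\ldots,d\}$ gives $j-1\in\{1,\ldots,d-1\}$, so the leading coefficient of $P$ is not divisible by the prime $d$. Viewing $P$ as an element of $\mathbb{F}_{d}[x]$, its degree is therefore exactly $2$. Because $d$ is prime, $\mathbb{F}_{d}$ is a field, so $P$ has at most two roots in $\mathbb{F}_{d}$. Since $\{0,\ldots,d-1\}$ is a complete set of residues modulo $d$, the congruence has at most two solutions in this range, which proves the claim.

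There is essentially no obstacle here; the only point to be careful about is the exclusion $j=1$ (the Fourier matrix), where the leading coefficient of $P$ vanishes modulo $d$ and the polynomial becomes linear instead of quadratic. The statement of the lemma correctly restricts to $j\in\{2,\ldots,d\}$, and the parenthetical remark after the lemma explains that for $H_{1}=F$ the polynomial $-kx-n$ has exactly one root whenever $k\not\equiv 0$, recovering the known single appearance of each root in each column of $F$.
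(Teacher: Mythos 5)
Your proof is correct and follows essentially the same route as the paper's: reduce the equality $\omega^{(j-1)x^{2}-kx}=\omega^{n}$ to the quadratic congruence $(j-1)x^{2}-kx-n\equiv 0 \pmod d$ and note that, since $j\neq 1$ and $d$ is prime, this has at most two solutions (you merely make the field-theoretic justification more explicit). The only point the paper adds is a one-line remark that the case $d=2$ is trivial, since the explicit formula (\ref{eq:MU basis states odd prime d}) is stated for odd primes, but the claim there holds anyway because each column has only two entries.
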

\begin{proof}
We need to determine the number of solutions of the equation $\omega^{-kx+\left(j-1\right)x^{2}}=\omega^{n}$
which becomes $\left(j-1\right)x^{2}-kx-n\mod d=0$ upon taking the
logarithm and rearranging. Since $j\neq1$, the equation is quadratic
for each $n$ and there can be at most two integer solutions for the
unknown $x$. The extension to the special case of $d=2$ is trivial.
\end{proof}
According to Theorem \ref{thm: Condition on 0-norms for saturation},
a state saturating (\ref{eq:extended sum UR}) must have $(d-1)/2$
vanishing expansion coefficients in each MU basis of the standard
set, in any odd prime dimension. A third general result is that there
are constraints on the distributions of these zeroes when expanded
in the MU bases of a complete set.

To spell out these constraints, let us introduce the \textit{zero
distributions} $\mathcal{Z}^{j}$ of a state $\psi\in{\cal H}_{d}$
which list the indices of the vanishing expansion coefficients in
the $\left(d+1\right)$ bases of the complete set, 
\begin{equation}
\mathcal{Z}^{j}=\left\{ \kappa\in\left\{ 0\ldots\,d-1\right\} :\braket{\phi_{\kappa}^{j}}{\psi}=0\right\} \,,\qquad j=0,\ldots,\,d\,.\label{eq:zero-distribution}
\end{equation}
Using the relation $\braket{\phi_{\kappa}^{j}}{\psi}=\bra{\phi_{\kappa}^{0}}H_{j}^{\dagger}\ket{\psi}$,
one can also think of $\mathcal{Z}^{j}$ as the set of vanishing coefficients
of the state $H_{j}^{\dagger}\ket{\psi}$ in the computational basis.

Two zero distributions of vectors in the same Hilbert space are said
to be \textit{compatible}\textit{\emph{,}}\emph{ }$\mathcal{Z}\sim\mathcal{Z}^{\prime}$,
if they are equal up to a cyclic shift. In other words, two compatible
distributions $\mathcal{Z}=\left\{ \kappa_{1},\kappa_{2},...,\kappa_{\delta}\right\} $
and $\mathcal{Z}^{\prime}=\left\{ \kappa_{1}^{\prime},\kappa_{2}^{\prime},...,\kappa_{\delta}^{\prime}\right\} $
must have the same number $\delta$ of elements and the mapping $\kappa_{i}\mapsto\kappa_{i}+\mu\,\mod d$
for some fixed integer $\mu$ must be a bijection from $\mathcal{Z}$
to $\mathcal{Z}^{\prime}$. Compatibility of zero distributions is
an equivalence relation between classes of $d$ elements.

The extension of Chebotarëv's Theorem shown in Sec. \ref{subsec:Support-inequality-MU-pairs}
and Lemma \ref{lem:Frequency of roots} imply a constraint on zero
distributions for all prime dimensions $d>3$. This property will
be used in Sec. \ref{subsec:d=00003D5,7} to prove that the support
inequality (\ref{eq:extended sum UR}) \emph{cannot} be saturated
in dimensions $d=5$ and $d=7$.
\begin{thm}
\label{thm:No compatible zero-distributions}Let $d>3$ be prime and
$\psi\in\mathcal{H}_{d}$ be a state with $(d-1)/2$ expansion coefficients
vanishing in the computational basis and in two more standard MU bases,
i.e.
\begin{equation}
\norm{\psi}_{0}=\norm{\psi}_{j_{1}}=\norm{\psi}_{j_{2}}=\frac{d+1}{2}\,,\qquad j_{1}>j_{2}\neq0\,.\label{eq: equals supports}
\end{equation}
Then the zero distributions associated with the vectors $H_{j_{1}}^{\dagger}\ket{\psi}$
and $H_{j_{2}}^{\dagger}\ket{\psi}$, respectively, are incompatible.
\end{thm}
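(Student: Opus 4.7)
The plan is to argue by contradiction. Assume the two zero distributions are compatible, so there exists $\mu\in\mathbb{Z}_d$ with $\mathcal{Z}^{j_2}=\{\kappa+\mu\bmod d:\kappa\in\mathcal{Z}^{j_1}\}$. Abbreviate $A=\{x\in\{0,\ldots,d-1\}:\psi_x\neq 0\}$, the support of $\psi$ in the computational basis (of cardinality $(d+1)/2$), and $K=\mathcal{Z}^{j_1}$ (of cardinality $(d-1)/2$). The overall strategy is to re-express both families of linear vanishing conditions on $\psi$ as conditions on a single Fourier submatrix, apply Chebotarëv to collapse them into a proportionality, and deduce that a non-trivial quadratic polynomial over $\mathbb{Z}_d$ must be constant on $A$.

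First I would write out the vanishing conditions $(H_{j_1}^{\dagger}\psi)_k=0$ for $k\in K$ and $(H_{j_2}^{\dagger}\psi)_{k+\mu}=0$ for $k\in K$ using the explicit Hadamard entries (\ref{eq:hadamard matrices defn}). Pulling the quadratic Gauss phases into the sums, both sets of equations become
\begin{equation*}
\sum_{x\in A}\omega^{kx}u_x=0,\qquad \sum_{x\in A}\omega^{kx}v_x=0,\qquad k\in K,
\end{equation*}
where $u_x=\omega^{-(j_1-1)x^2}\psi_x$ and $v_x=\omega^{\mu x-(j_2-1)x^2}\psi_x$. Both $u$ and $v$ are supported on $A$ and nowhere zero there, and they lie in the kernel of the same $|K|\times|A|$ submatrix $F_{K,A}$ of the Fourier matrix, which has shape $\tfrac{d-1}{2}\times\tfrac{d+1}{2}$. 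Chebotarëv's Theorem \ref{thm:Chebotarev's theorem} makes every square submatrix of $F$ invertible, so $F_{K,A}$ has full row rank, its kernel is one-dimensional, and consequently $v=\lambda u$ pointwise on $A$ for some $\lambda\neq 0$.

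Finally, dividing the pointwise identity $v_x=\lambda u_x$ by $\psi_x\neq 0$ yields $\omega^{(j_1-j_2)x^2+\mu x}=\lambda$ for every $x\in A$, i.e.\ the polynomial $f(x)=(j_1-j_2)x^2+\mu x\bmod d$ is constant on $A$. Because $1\leq j_1-j_2\leq d-1$ and $d$ is prime, the leading coefficient of $f$ is invertible mod $d$; exactly the counting used in the proof of Lemma \ref{lem:Frequency of roots} then shows that $f(x)\equiv c\pmod d$ admits at most two solutions for any $c$. Hence $|A|\leq 2$, contradicting $(d+1)/2\geq 3$ for $d>3$. I expect the only non-routine step to be spotting the right change of variables: once the two quadratic Gauss phases are absorbed into $u$ and $v$, the zero conditions coming from the two distinct MU bases collapse to a single Fourier-kernel equation, after which Chebotarëv together with the quadratic-roots bound mechanically finish the argument.
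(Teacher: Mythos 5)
Your proof is correct, and it reaches the paper's conclusion by a somewhat different route in the middle step. The paper converts the assumed shift $\mu$ into an operator statement via Eq. (\ref{eq: relating columns of H_j}), $\ket{\phi_{\kappa+\mu}^{j_{2}}}=D^{j_{2}-j_{1}}B^{\mu}\ket{\phi_{\kappa}^{j_{1}}}$, notes that $\psi$ and $V_{\mu}\psi$ (with $V_{\mu}$ diagonal) are then both orthogonal to the $(d-1)$ vectors singled out by $\mathcal{Z}^{0}$ and $\mathcal{Z}^{j_{1}}$, and invokes Corollary \ref{cor:Linear independence of d vectors from ANY two MUBs} to force $V_{\mu}\psi=\lambda\psi$; the resulting condition $\omega^{(j_{1}-j_{2})x^{2}+\mu x}=\lambda$ on the computational support is exactly the equation you obtain, and it is killed in both proofs by the quadratic-root count of Lemma \ref{lem:Frequency of roots}. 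You instead absorb the two quadratic Gauss phases into the modified vectors $u$ and $v$, restrict to the support $A$, and extract the proportionality from the one-dimensional kernel of the $\tfrac{d-1}{2}\times\tfrac{d+1}{2}$ Fourier submatrix, which requires only Chebotarëv's theorem for $F$ itself (your ``$F_{K,A}$'' is really a submatrix of $\overline{F}$ up to normalisation, which changes nothing). This makes your argument more self-contained: it bypasses Corollary \ref{cor:Linear independence of d vectors from ANY two MUBs}, and hence the Hadamard-product machinery of Lemma \ref{lem: reduction of product of Hadamards}, and it never uses $\mathcal{Z}^{0}$ as orthogonality conditions but only the support size in $\mathcal{B}_{0}$; the paper's version, in exchange, reuses structural tools ($D$, $B$, the linear-independence corollary) that it needs elsewhere anyway. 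Both arguments also make transparent why $d=3$ escapes: there $|A|=2$, so the at-most-two-roots bound yields no contradiction.
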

\begin{proof}
Since the state $\psi$ has $d_{-}\equiv(d-1)/2$ vanishing components
in three bases with labels $j=0,j_{1},j_{2}$, it satisfies $3d_{-}$
conditions,
\begin{align}
\braket{\phi_{\kappa_{1}^{0}}^{0}}{\psi} & =\braket{\phi_{\kappa_{2}^{0}}^{0}}{\psi}=\ldots=0\,,\qquad\mathcal{Z}^{0}=\left\{ \kappa_{1}^{0},\kappa_{2}^{0},\ldots,\kappa_{d_{-}}^{0}\right\} \,,\nonumber \\
\braket{\phi_{\kappa_{1}^{1}}^{j_{1}}}{\psi} & =\braket{\phi_{\kappa_{2}^{1}}^{j_{1}}}{\psi}=\ldots=0\,,\qquad\mathcal{Z}^{j_{1}}=\left\{ \kappa_{1}^{1},\kappa_{2}^{1},\ldots,\kappa_{d_{-}}^{1}\right\} \,,\label{eq:scalar products - Thm 6}\\
\braket{\phi_{\kappa_{1}^{2}}^{j_{2}}}{\psi} & =\braket{\phi_{\kappa_{2}^{2}}^{j_{2}}}{\psi}=\ldots=0\,,\qquad\mathcal{Z}^{j_{2}}=\left\{ \kappa_{1}^{2},\kappa_{2}^{2},\ldots,\kappa_{d_{-}}^{2}\right\} \,.\nonumber 
\end{align}
We proceed by contradiction. To assume that the zero distributions
$\mathcal{Z}^{j_{1}}$ and $\mathcal{Z}^{j_{2}}$ \emph{are} compatible
means that they are related by a cyclic shift by some integer $\mu\in\{0\ldots d-1\}$.
In particular, we can arrange the elements in the two sets such that
\begin{equation}
\kappa_{i}^{2}=\kappa_{i}^{1}+\mu\quad\text{mod}\,d\qquad\text{for all}\quad i\in\left\{ 1\ldots d_{-}\right\} .\label{eq:compatibility condition Thrm 6}
\end{equation}
Then, according to Eq. (\ref{eq: relating columns of H_j}), the corresponding
states must be related by powers of the matrices $D$ and $B$, 
\begin{equation}
\ket{\phi_{\kappa_{i}^{2}}^{j_{2}}}=D^{j_{2}-1}B^{\kappa_{i}^{2}}\ket{\phi_{0}^{1}}=D^{j_{2}-1}B^{\kappa_{i}^{2}}D^{-j_{1}+1}B^{-\kappa_{i}^{1}}\ket{\phi_{\kappa_{i}^{1}}^{j_{1}}}=D^{j_{2}-j_{1}}B^{\mu}\ket{\phi_{\kappa_{i}^{1}}^{j_{1}}}\,,\label{eq: relate columns of MU bases-1}
\end{equation}
where we have used the fact that $D$ and $B$ commute. Defining $V_{\mu}^{\dagger}=D^{j_{2}-j_{1}}B^{\mu}$,
the third set of conditions in (\ref{eq:scalar products - Thm 6})
turns into
\begin{equation}
\braket{\phi_{\kappa_{i}^{2}}^{j_{2}}}{\psi}=\braket{\phi_{\kappa_{i}^{1}}^{j_{1}}}{V_{\mu}\psi}=0\qquad\text{for all}\quad i\in\left\{ 1\ldots d_{-}\right\} .\label{eq:scalar products set 1 for U=00005Cpsi}
\end{equation}
Since $V_{\mu}$ is diagonal in the computational basis, we have
\begin{equation}
\braket{\phi_{\kappa_{i}^{0}}^{0}}{\psi}=\braket{\phi_{\kappa_{i}^{0}}^{0}}{V_{\mu}\psi}=0\qquad\text{for all}\quad i\in\left\{ 1\ldots d_{-}\right\} \,,\label{eq:scalar products set 2 for U=00005Cpsi}
\end{equation}
which means that $\mathcal{Z}^{0}$ and $\mathcal{Z}^{j_{1}}$ are
zero distributions for the pair of vectors $\psi$ and $V_{\mu}\psi$.
In other words, these two states are both orthogonal to the same set
of $2d_{-}=\left(d-1\right)$ vectors 
\begin{equation}
\left\{ \phi_{\kappa_{1}^{0}}^{0},\ldots,\phi_{\kappa_{d_{-}}^{0}}^{0},\phi_{\kappa_{1}^{1}}^{j_{1}},\ldots,\phi_{\kappa_{d_{-}}^{1}}^{j_{1}}\right\} ,\label{eq: set of OG vectors}
\end{equation}
stemming from the computational basis $\mathcal{B}_{0}$ and the basis
$\mathcal{B}_{j_{1}}$. According to Corollary \ref{cor:Linear independence of d vectors from ANY two MUBs},
this is a set of $(d-1)$ linearly independent vectors so that only
one unique ray in $\mathcal{H}_{d}$ can exist that is orthogonal
to all of them. Therefore, the vectors $\psi$ and $V_{\mu}\psi$
must be collinear, i.e. $V_{\mu}\psi=\lambda\psi$ for some non-zero
scalar $\lambda\in\mathbb{C}$.

Since $V_{\mu}=D^{j_{1}-j_{2}}B^{-\mu}$ is diagonal in $\mathcal{B}_{0}$,
the computational basis states are eigenvectors of $V_{\mu}$. By
assumption, the state $\psi$ has $d_{+}\equiv\left(d+1\right)/2$
non-zero coefficients in this basis. Thus, the state $\psi$ will
be an eigenvector of the unitary $V_{\mu}$ only if $\lambda$ is
an eigenvalue with multiplicity of $d_{+}$ (at least). However, this
is impossible for prime dimensions $d>3$: the non-zero matrix elements
on the diagonal of $V_{\mu}$ coincide with the components of the
vector$\sqrt{d}\ket{\phi_{-\mu}^{j_{1}-j_{2}+1}}$ in the computational
basis but for $j_{1}>j_{2}\neq0\,\left(\text{mod}\,d\right)$ no more
than two of the components may coincide according to Lemma \ref{lem:Frequency of roots}.
Thus, at most two of the eigenvalues of $V_{\mu}$ can coincide. No
contradiction arises for dimension $d=3$ where $\psi$ has exactly
two non-vanishing coefficients in the computational basis.
\end{proof}

\subsection{Dimension $d=3$ \label{subsec: Dimension d=00003D3}}

To prove that the bound (\ref{eq:extended sum UR}) can be achieved
in the space $\mathcal{H}_{3}$, we exhibit the states which minimise
the support inequality.
\begin{thm}
\label{thm:Saturating states in d=00003D3}The state $\psi$ saturates
the generalised support inequality (\ref{eq:extended sum UR}) in
dimension $d=3$ if and only if it is one of the following nine (non-normalised)
qutrit states,
\begin{equation}
\begin{pmatrix}1\\
-\omega^{m}\\
0
\end{pmatrix}\,,\quad\begin{pmatrix}1\\
0\\
-\omega^{m}
\end{pmatrix}\,,\quad\begin{pmatrix}0\\
1\\
-\omega^{m}
\end{pmatrix}\,,\qquad m\in\left\{ 0,1,2\right\} \,,\label{eq: states that saturate extended UR d=00003D3}
\end{equation}
with $\omega\equiv e^{2i\pi/3}$ being a third root of unity.
\end{thm}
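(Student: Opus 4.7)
The plan is to invoke Theorem \ref{thm: Condition on 0-norms for saturation}, which reduces the task to finding all qutrit states with exactly $(d+1)/2 = 2$ non-zero components in each of the four MU bases of the standard set. Since $\norm{\psi}_0 = 2$, the computational-basis expansion of $\psi$ has precisely one vanishing coordinate, which splits the analysis into three cases according to whether this zero sits in position $0$, $1$, or $2$.

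Consider first $\psi = (a, b, 0)^T$ with $a, b \neq 0$. Using $H_j = D^{j-1}F$ with $D = \text{diag}(1, \omega, \omega)$ in dimension $d=3$, a short computation gives
\begin{equation*}
(H_j^{\dagger} \psi)_k = \frac{1}{\sqrt{3}} \left( a + \omega^{k-j+1} b \right)\,, \qquad j \in \{1,2,3\},\ k \in \{0,1,2\}\, .
\end{equation*}
The requirement $\norm{\psi}_j = 2$ forces this expression to vanish for exactly one $k$, equivalent to $-b/a$ being a cube root of unity, i.e.\ $b = -\omega^m a$ for some $m \in \{0,1,2\}$. Because $\omega$ is primitive, any such choice automatically produces exactly one zero in each of $\mathcal{B}_1, \mathcal{B}_2, \mathcal{B}_3$ simultaneously, so all three support conditions are satisfied. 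After rescaling so that $a=1$, this case yields the three rays $(1, -\omega^m, 0)^T$.

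The two remaining cases, $\psi = (a, 0, c)^T$ and $\psi = (0, b, c)^T$, are handled by entirely analogous calculations; the expressions for $(H_j^{\dagger}\psi)_k$ differ only by a relabelling of exponents and produce the further rays $(1, 0, -\omega^m)^T$ and $(0, 1, -\omega^m)^T$. Substitution verifies the converse direction directly. The argument is essentially a bookkeeping exercise and I do not foresee any serious obstacle. The one noteworthy feature is that the diagonal matrix $D$ has two coinciding entries in dimension $3$---a phenomenon highlighted by Lemma \ref{lem:Frequency of roots}---which is what allows a single choice of $b/a$ to produce the required zero in all three non-computational MU bases at once, in line with the fact that Theorem \ref{thm:No compatible zero-distributions} forbids this in higher prime dimensions.
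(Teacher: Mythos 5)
Your proposal is correct and follows essentially the same route as the paper: invoke Theorem \ref{thm: Condition on 0-norms for saturation} to reduce saturation to support size two in every basis, place the single computational-basis zero in each of the three possible positions, and show by explicit computation that support two in the Hadamard bases forces $b=-\omega^{m}a$ (up to relabelling), yielding the nine states. The only quibble is your closing remark: the coincidence of two entries of $D=\mathrm{diag}(1,\omega,\omega)$ explains the simultaneous zeros only for the case $(0,b,c)^{T}$ (where the zero sits at the same index in all three bases), whereas for $(a,b,0)^{T}$ and $(a,0,c)^{T}$ the relevant diagonal entries differ and the zero simply migrates with $j$, as your own formula $(H_{j}^{\dagger}\psi)_{k}\propto a+\omega^{k-j+1}b$ already shows---but this side comment does not affect the validity of the proof.
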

\begin{proof}
Theorem \ref{thm: Condition on 0-norms for saturation} implies that
a state $\psi$ saturates Eq. (\ref{eq:extended sum UR}) w.r.t. a
complete standard set of MU bases if and only if it has support two
in each of them, i.e. $||\psi||_{j}\equiv\norm{H_{j}^{\dagger}\psi}_{0}=2$,
$j=0\ldots3$. First, we assume that the third component of a candidate
state vanishes in the computational basis, i.e. $\psi=\begin{pmatrix}a, & b, & 0\end{pmatrix}^{T}$,
with non-zero complex numbers $a$ and $b$. Applying the matrices
$H_{j}^{\dagger}$, $j=1,2,3$, to it, we find four vectors, 
\begin{equation}
\left(\begin{array}{c}
a\\
b\\
0
\end{array}\right)\,,\left(\begin{array}{c}
a+b\\
a+\omega b\\
a+\omega^{2}b
\end{array}\right)\,,\left(\begin{array}{c}
a+\omega^{2}b\\
a+b\\
a+\omega b
\end{array}\right)\,,\left(\begin{array}{c}
a+\omega b\\
a+\omega^{2}b\\
a+b
\end{array}\right)\,.\label{eq:saturating states d=00003D3 - part 1}
\end{equation}
The components of the last three vectors agree, except for permutations.
Hence, support size two can occur in three different ways: one component
of each vector vanishes if 
\begin{equation}
b=-\omega^{m}a\,,\qquad m\in\left\{ 0,1,2\right\} \,,\label{eq:saturating states d=00003D3 - part 2}
\end{equation}
holds for some value of $m$. After removing an irrelevant phase,
we obtain the first three vectors given in Eq. (\ref{eq: states that saturate extended UR d=00003D3}).
Second, repeating this argument for initial vectors of the form $\psi=\begin{pmatrix}a, & 0, & b\end{pmatrix}^{T}$
and $\psi=\begin{pmatrix}0, & a, & b\end{pmatrix}^{T}$, respectively,
leads to the remaining six vectors in (\ref{eq: states that saturate extended UR d=00003D3}).

Having exhausted all three-component vectors in the computational
basis with support two, we have shown that the nine vectors in (\ref{eq: states that saturate extended UR d=00003D3})
are the only states saturating the support inequality (\ref{eq:extended sum UR})
for $d=3$.
\end{proof}

\subsection{Dimensions $d=5$ and $d=7$ \label{subsec:d=00003D5,7}}

We will show that it is impossible to reach the lower bound of the
support inequality (\ref{eq:extended sum UR}) in dimensions $d=5$
and $d=7$. The proof relies on a property of the zero distributions
of the vectors $H_{j}^{\dagger}\psi$, $j=0\ldots d$, which were
introduced in Sec. \ref{subsec:Results for any d}.
\begin{thm}
\label{thm:No saturation in d=00003D5,7}The additive support uncertainty
relation (\ref{eq:extended sum UR}) cannot be saturated in dimensions
$d=5$ and $d=7$.
\end{thm}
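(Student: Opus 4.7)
The plan is to argue by contradiction using a counting (pigeonhole) argument on the zero distributions of a hypothetical saturating state. Suppose $\psi\in\mathcal{H}_d$ saturates (\ref{eq:extended sum UR}) in dimension $d\in\{5,7\}$. By Theorem \ref{thm: Condition on 0-norms for saturation}, $\psi$ must have equal support $(d+1)/2$ in all $(d+1)$ MU bases of the standard set, so each vector $H_j^{\dagger}\psi$, $j=0,\ldots,d$, has exactly $(d-1)/2$ vanishing components in the computational basis. For each $j\in\{1,\ldots,d\}$ this yields a zero distribution $\mathcal{Z}^j\subseteq\{0,\ldots,d-1\}$ of fixed cardinality $(d-1)/2$.

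Next I would invoke Theorem \ref{thm:No compatible zero-distributions}: because $\|\psi\|_0=\|\psi\|_{j_1}=\|\psi\|_{j_2}=(d+1)/2$ for every choice $1\leq j_2<j_1\leq d$, the distributions $\mathcal{Z}^{j_1}$ and $\mathcal{Z}^{j_2}$ are incompatible, i.e.\ lie in distinct orbits of the cyclic-shift action of $\mathbb{Z}_d$ on the $(d-1)/2$-subsets of $\{0,\ldots,d-1\}$. Thus saturation produces $d$ pairwise incompatible zero distributions $\mathcal{Z}^1,\ldots,\mathcal{Z}^d$, one per non-computational basis, which must occupy $d$ distinct $\mathbb{Z}_d$-orbits.

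Now I would count those orbits. Because $d$ is prime and $1\leq(d-1)/2<d$, the action of $\mathbb{Z}_d$ on $(d-1)/2$-subsets is free: any non-trivial shift generates all of $\mathbb{Z}_d$, so a fixed subset would have to contain the whole orbit of each of its elements and thus be the full set $\{0,\ldots,d-1\}$, contradicting its cardinality. Hence every orbit has size exactly $d$, and the total number of orbits is $\binom{d}{(d-1)/2}/d$. For $d=5$ this gives $\binom{5}{2}/5=2$ orbits, and for $d=7$ it gives $\binom{7}{3}/7=5$ orbits. In both cases the required number $d$ of pairwise distinct orbits exceeds the total number of orbits ($5>2$ and $7>5$ respectively), a contradiction; therefore no state can saturate (\ref{eq:extended sum UR}) in these dimensions.

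The main obstacle is really just the orbit-counting argument in the previous paragraph, and specifically the freeness of the $\mathbb{Z}_d$-action, which is where primality of $d$ is used a second time (beyond its role in Theorems \ref{thm: pair sums}, \ref{thm: Condition on 0-norms for saturation} and \ref{thm:No compatible zero-distributions}). It is worth noting that this pigeonhole strategy is inherently tied to small dimensions: already for $d=11$ one has $\binom{11}{5}/11=42\geq 11$ orbits, so the mere counting of incompatibility classes no longer precludes saturation, and a finer argument (presumably pursued later in the paper, perhaps via Chebotarëv-type rank constraints from Corollary \ref{cor:Extension of Chebo to product matrices}) would be needed to handle larger primes.
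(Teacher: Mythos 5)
Your proposal is correct and follows essentially the same route as the paper: saturation forces $(d-1)/2$ zeroes in every basis (Theorem \ref{thm: Condition on 0-norms for saturation}), Theorem \ref{thm:No compatible zero-distributions} then makes the $d$ zero distributions $\mathcal{Z}^1,\ldots,\mathcal{Z}^d$ pairwise incompatible, and the count $\binom{d}{(d-1)/2}/d$ of shift-equivalence classes ($2$ for $d=5$, $5$ for $d=7$) is too small, giving the contradiction. Your explicit verification that the cyclic-shift action is free (so every orbit has size exactly $d$) is a small detail the paper leaves implicit, but it does not change the argument.
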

\begin{proof}
Let $\mathcal{Z}_{n}^{d}$ be the set of the zero distributions with
$n$ zeroes among the computational-basis coefficients of qudit states
in the Hilbert space $\mathcal{H}_{d}$. These distributions are determined
by choosing $n$ out of $d$ indices; hence, there are $|\mathcal{Z}_{n}^{d}|=\binom{d}{n}$
such sets. Recalling that \emph{compatible} sets of zero distributions
form equivalence classes, obtained from rigidly shifting a given one,
only $|\mathcal{Z}_{n}^{d}/\sim|=\binom{d}{n}/d$ \emph{inequivalent}
zero distributions exist.

According to Theorem \ref{thm: Condition on 0-norms for saturation},
a state $\ket{\psi}\in\mathcal{H}_{d}$ saturating (\ref{eq:extended sum UR})
for $d>3$, must have $n=(d-1)/2$ zeroes in each basis. In addition,
a saturating state requires the existence of at least $d$ \emph{incompatible}
zero distributions as Theorem \ref{thm:No compatible zero-distributions}
does not allow compatible zero distributions for more than two bases.
In other words, the inequality $|\mathcal{Z}_{(d-1)/2}^{d}/\sim|\geq d$
must hold. Clearly, this does not happen for $d=5$ and $d=7$ since
$|\mathcal{Z}_{2}^{5}/\sim|=2<5$ and $|\mathcal{Z}_{3}^{7}/\sim|=5<7$
, respectively. When $d\geq11,$ however, the inequality is satisfied,
with $|\mathcal{Z}_{5}^{11}/\sim|=42>11$, for example.
\end{proof}

\subsection{Numerical results for $5\protect\leq d\protect\leq19$ \label{subsec:Higher-dimensions}}

For prime numbers $d$ greater than seven, more than $d$ incompatible
zero distributions exist which removes the bottleneck we exploited
to prove Theorem \ref{thm:No saturation in d=00003D5,7}. In the absence
of an analytic handle on the problem, we will use numerical means
to check whether the bound imposed by (\ref{eq:extended sum UR})
can be reached for dimensions larger than $d=7$. 

A saturating state necessarily has $(d-1)/2$ zeroes in each MU basis.
Thus, if one picks two distinct MU bases with labels $j_{1},j_{2}\in\left\{ 0\ldots\,d\right\} $,
say, with corresponding zero distributions $\mathcal{Z}^{j_{1}}$
and $\mathcal{Z}^{j_{2}}$, the state will have vanishing scalar products
with a total of $(d-1)$ states which---in view of Corollary \ref{cor:Linear independence of d vectors from ANY two MUBs}---are
known to be linearly independent. Consequently, there is a\emph{ unique}
ray $\psi^{\perp}\in\mathcal{H}_{d}$ associated with any two zero
distributions of the type considered. If the support size of the states
$\psi^{\perp}$ generated in this way (i.e. for all possible choices
of initial zero distributions $\mathcal{Z}^{j_{1}}$ and $\mathcal{Z}^{j_{2}}$)
is always larger than $(d+1)/2$ in some third MU basis, then the
support inequality (\ref{eq:extended sum UR}) cannot be saturated:
if no state with support size $(d+1)/2$ in \emph{three }MU bases
exists, then no state with support size $(d+1)/2$ in \emph{$(d+1)$
}MU bases will exist. Since only a finite number of zero distributions
needs to be checked for a given dimension $d$, this approach actually
represents an \emph{algorithm} to check whether the lower bound can
be reached.

Running the program for prime numbers with $5\leq d\leq19$ means
to check an exponentially increasing number of cases. On a standard
PC, the program ran about a second for $d=5$ and $d=7$ while it
took about a week for $d=17$. No state has been found which would
display $(d-1)/2$ zeroes in three MU bases. For dimensions $d=5$
and $d=7$, this result is stronger than that of Sec. \ref{subsec:d=00003D5,7}
since the non-existence of a state with two and three zeroes, respectively,
is sufficient to derive Theorem \ref{thm:No saturation in d=00003D5,7},
but not \emph{vice versa}. Due to the exponential increase in the
number of zero distributions, dimensions larger than $d=19$ were
out of of reach.

\subsection{Dimensions $d>19$\label{subsec:Dimensions d>19-1}}

To satisfy the additive support inequality (\ref{eq:extended sum UR})
relative to $(d+1)$ MU bases, a state needs to satisfy more than
one pair relation (\ref{eq: all pair sum URs}) simultaneously which
seems unlikely. It is all the more surprising that for dimension $d=3$
the bound $T(3)=8$ is actually sharp, i.e. $T_{s}(3)=T(3)$. Our
results for prime dimensions up to $d=19$ suggest that this case
is exceptional.

We conjecture that \emph{the generalised uncertainty relation (\ref{eq:extended sum UR})
in prime dimensions can only be saturated when $d=3$. }Here is a
plausibility argument to support this view. Assume that a saturating
state $\psi\in\mathcal{H}_{d}$ exists for some prime dimension $d\geq3$.
According to Theorem \ref{thm: Condition on 0-norms for saturation},
the state must be orthogonal to exactly $\left(d-1\right)/2$ vectors
from each of the $\left(d+1\right)$ MU bases. Corollary \ref{cor:Linear independence of d vectors from ANY two MUBs}
implies that orthogonality with respect to just two such sets---i.e.
$\left(d-1\right)$ vectors---already determines a unique state.
Therefore, the remaining $\left(d-1\right)^{2}/2$ vectors (one set
of $(d-1)/2$ vectors is associated with each of the $(d-1)$ MU bases
not yet considered) must all lie in the same $\left(d-1\right)$-dimensional
subspace orthogonal to the state $\psi$. This is known to happen
for $d=3$ but seems hard to satisfy for larger dimensions.

\section{Sharp lower bounds\label{sec:Achievable-bounds}}

According to the results presented in Sec. \ref{sec:Saturating-the-inequality},
no states exist which would saturate the lower bounds (\ref{eq:extended sum UR})
for the support sizes in dimensions up to $d=19$, with the exception
of $d=3$. The focus of this section will be on identifying achievable
bounds.

\subsection{Dimension $d=3$\label{subsec:Dimension d=00003D3}}

Theorem \ref{thm:Saturating states in d=00003D3} in Sec. \ref{subsec: Dimension d=00003D3}
displays the states which achieve the lower bound (\ref{eq:extended sum UR})
in dimension $d=3$. In other words, the bound for the overall support
of qutrit states $\psi$ is sharp, $\mathcal{S}(3)\geq8$, where $\mathcal{S}\left(d\right)\equiv\sum_{j=0}^{d}\norm{\psi}_{j}$
for $\psi\in\mathcal{H}_{d}$.

\subsection{Dimension $d=5$ \label{subsec: A-sharp-bound d=00003D5}}

Theorem \ref{thm:No saturation in d=00003D5,7} shows that, for any
state $\psi\in\mathcal{H}_{5}$, the overall support of the states
$H_{j}^{\dagger}\psi$, $j=\left\{ 0\ldots d\right\} $, must satisfy
$\mathcal{S}(5)>18$. In this section, we will prove a sharp lower
bound, namely $\mathcal{S}(5)\geq22\equiv T_{s}(5)$.

To begin, we generalise Lemma \ref{lem:Frequency of roots} which
will be necessary for the proof of Lemma \ref{lem:Stronger argument for d=00003D5}.

\begin{restatable}{lem}{identicalentries}

\label{lem:identical entries of MU states}Let $d$ be an odd prime
and $\omega\equiv e^{\frac{2i\pi}{d}}$. Consider two states $\ket{\phi_{k_{1}}^{j_{1}}},\ket{\phi_{k_{2}}^{j_{2}}}\in\mathcal{H}_{d}$
taken from different standard MU bases, $j_{1},j_{2}\neq0$, and let
$\left\{ \ket x\right\} $ be the computational basis. Then there
can be at most two values of $x\in\left\{ 0\ldots d-1\right\} $ such
that
\begin{equation}
\braket x{\phi_{k_{1}}^{j_{1}}}=\omega^{n}\braket x{\phi_{k_{2}}^{j_{2}}}\label{eq:at most two MUB vector entries coincide}
\end{equation}
for the same value of $n\in\left\{ 0...d-1\right\} $. If two different
states are taken from the same basis, $j_{1}=j_{2}$, then the equation
has exactly one solution for each value of $n$.

\end{restatable}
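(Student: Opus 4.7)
The plan is to reduce the coincidence condition \eqref{eq:at most two MUB vector entries coincide} to a polynomial congruence modulo $d$ and then exploit the fact that $\mathbb{F}_d$ is a field (so the number of roots is bounded by the degree), very much in the spirit of the proof of Lemma \ref{lem:Frequency of roots}.

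First, I would substitute the explicit form of the MU-basis components from Eq.~\eqref{eq:hadamard matrices defn}, namely $\braket{x}{\phi_k^j}=\frac{1}{\sqrt d}\omega^{-kx+(j-1)x^2}$, into \eqref{eq:at most two MUB vector entries coincide}. After cancelling the common factor $1/\sqrt d$ and taking the logarithm with respect to $\omega$, the condition becomes the congruence
\begin{equation}
(j_1-j_2)\,x^{2}-(k_1-k_2)\,x-n\;\equiv\;0\pmod{d}\,.\label{eq:lem7-key-congruence}
\end{equation}
The whole proof consists in analysing this congruence in the two cases distinguished by the statement.

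For the first case, $j_1\neq j_2$ with $j_1,j_2\in\{1,\ldots,d\}$, the difference $j_1-j_2$ lies in $\{-(d-1),\ldots,-1\}\cup\{1,\ldots,d-1\}$ and is therefore invertible modulo the prime $d$. Consequently \eqref{eq:lem7-key-congruence} is a genuine quadratic equation over the field $\mathbb{F}_d$, which admits at most two solutions $x\in\{0,\ldots,d-1\}$, as claimed. For the second case, $j_1=j_2$, the quadratic term drops out and \eqref{eq:lem7-key-congruence} collapses to the linear congruence $(k_1-k_2)\,x\equiv -n\pmod d$; since the two states are assumed distinct and share the same basis, we have $k_1\neq k_2$ so that $k_1-k_2$ is invertible in $\mathbb{F}_d$, giving exactly one solution $x\equiv -n(k_1-k_2)^{-1}\pmod d$ for every choice of $n$.

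There is no real obstacle here beyond keeping track of the index ranges; the only subtlety is verifying that the leading coefficient of \eqref{eq:lem7-key-congruence} is nonzero modulo $d$ in each case, which follows immediately from the primality of $d$ together with the hypothesis $j_1,j_2\neq 0$ (so $j_1-j_2$ cannot accidentally be a multiple of $d$). The case $d=2$ is trivial and can be disposed of by direct inspection of the two Hadamard matrices in \eqref{eq:Hadamard matrices in d=00003D2}.
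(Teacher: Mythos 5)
Your proposal is correct and follows essentially the same route as the paper's own proof: substitute the explicit components from Eq.~(\ref{eq:hadamard matrices defn}), take the logarithm to obtain the congruence $(j_1-j_2)x^2+(k_2-k_1)x-n\equiv 0 \pmod d$, and bound the number of roots over $\mathbb{F}_d$ (at most two in the quadratic case, exactly one in the linear case with $k_1\neq k_2$). Your explicit check that the leading coefficient is invertible modulo the prime $d$ is a slight refinement of the paper's wording, and the $d=2$ remark is unnecessary since the lemma assumes $d$ odd, but neither point changes the argument.
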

\begin{proof}
See \nameref{Appendix}.
\end{proof}
Now consider a state with two vanishing expansion coefficients in
both the computational basis and a second basis of the complete set.
It turns out that such a state can have only non-zero coefficients
in the remaining four bases, resulting in a total support size of
26.

\begin{restatable}{lem}{restrictionfivedim}

\label{lem:Stronger argument for d=00003D5} If the support of a state
$\psi\in\mathcal{H}_{5}$ equals three in both the computational basis
and another standard MU basis with label $j\neq0$, i.e. $\norm{\psi}_{0}=\norm{\psi}_{j}=3$,
then its support size in each of the remaining four bases equals five,
$\norm{\psi}_{j^{\prime}}=5$, with $j^{\prime}\neq0,j$.

\end{restatable}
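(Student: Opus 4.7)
The plan is to argue by contradiction. Assume $\psi\in\mathcal{H}_{5}$ satisfies $\norm{\psi}_{0}=\norm{\psi}_{j}=3$ for some $j\neq 0$ and, in addition, $\langle\phi_{k}^{j'}|\psi\rangle=0$ for some $j'\neq 0,j$ and some $k\in\{0,\ldots,4\}$. Let $\{x_{1},x_{2},x_{3}\}\subset\{0,\ldots,4\}$ be the support positions of $\psi$ in $\mathcal{B}_{0}$, with amplitudes $a_{x_{i}}$, and let $\mathcal{Z}^{j}=\{k_{1}',k_{2}'\}$ be the zero positions in $\mathcal{B}_{j}$. By Corollary \ref{cor:Linear independence of d vectors from ANY two MUBs}, $\psi$ is determined up to an overall phase by the data $(\mathcal{Z}^{0},\mathcal{Z}^{j})$, so the amplitudes $a_{x_{i}}$ are fixed, up to scale, by the two linear conditions $\sum_{i}\omega^{k_{\ell}'x_{i}-(j-1)x_{i}^{2}}a_{x_{i}}=0$ for $\ell=1,2$.

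The hypothetical extra zero at position $k$ in $\mathcal{B}_{j'}$ imposes the third equation $\sum_{i}\omega^{kx_{i}-(j'-1)x_{i}^{2}}a_{x_{i}}=0$; for the non-zero amplitude vector to survive, the augmented $3\times 3$ coefficient matrix must be singular. After dividing column $i$ by $\omega^{k_{1}'x_{i}-(j-1)x_{i}^{2}}$ and setting $a=k_{2}'-k_{1}'$, $b=k-k_{1}'$, $c=j-j'$ (with $a,c\not\equiv 0\pmod 5$), singularity becomes the demand that some $\alpha,\beta\in\mathbb{C}$ satisfy
\[
\omega^{bx_{i}+cx_{i}^{2}}\;=\;\alpha+\beta\,\omega^{ax_{i}},\qquad i=1,2,3,
\]
where the three numbers $z_{i}:=\omega^{ax_{i}}$ are pairwise distinct (because $a\not\equiv 0\pmod 5$) and lie on the unit circle.

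The decisive step is a unit-circle argument. Since each left-hand side has modulus one, $|\alpha+\beta z_{i}|=1$ for three distinct unit-modulus numbers $z_{i}$. Expanding $|\alpha+\beta z|^{2}=1$ via $\bar{z}=1/z$ and multiplying through by $z$ yields the quadratic $\bar{\alpha}\beta\,z^{2}+(|\alpha|^{2}+|\beta|^{2}-1)\,z+\alpha\bar{\beta}=0$, which admits at most two roots whenever $\alpha\beta\neq 0$. Hence necessarily $\alpha=0$ or $\beta=0$, in which case the original relation collapses to $\omega^{(b-a)x_{i}+cx_{i}^{2}}=\mathrm{const}$ or $\omega^{bx_{i}+cx_{i}^{2}}=\mathrm{const}$; both are precisely the shape of identity ruled out by Lemma \ref{lem:identical entries of MU states} (applied with $j_{1}=j$, $j_{2}=j'$), since $c=j-j'\neq 0$ makes the exponent a genuine quadratic over $\mathbb{F}_{5}$ with at most two roots in $\{0,\ldots,4\}$---yet three are demanded. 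Every branch contradicts the existence of the extra zero, forcing $\norm{\psi}_{j'}=5$ for all $j'\neq 0,j$. The only delicate point I foresee is the unit-circle manipulation, which works so cleanly only because $|\omega^{bx_{i}+cx_{i}^{2}}|=1$ exactly---a reflection of the mutual unbiasedness built into the Hadamard matrices $H_{j}$.
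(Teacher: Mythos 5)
Your argument is correct, and while it shares the paper's overall skeleton (assume an extra zero in a third basis, reduce to the singularity of a $3\times3$ matrix built from the three zero conditions evaluated at the three support positions, then derive a contradiction), the way you kill the singularity is genuinely different. The paper routes the reduction through a $5\times5$ matrix whose first four columns are shown to be linearly independent via Corollary \ref{cor:Linear independence of d vectors from ANY two MUBs}, expands the resulting $3\times3$ minor into six fifth roots of unity, and then invokes the $\mathbb{Q}$-linear structure of vanishing sums of fifth roots (only full orbits $q(1+\omega+\cdots+\omega^{4})$ vanish) followed by a case-by-case pairing analysis resting on Corollary \ref{cor:Extension of Chebo to product matrices} and Lemma \ref{lem:identical entries of MU states}. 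You instead phrase singularity as a row dependence $\omega^{bx_i+cx_i^{2}}=\alpha+\beta\,\omega^{ax_i}$ and dispose of it with a single unimodularity observation: $|\alpha+\beta z|=1$ with $|z|=1$ and $\alpha\beta\neq0$ is a genuine quadratic in $z$ with at most two roots, so three distinct $z_i$ force $\alpha=0$ or $\beta=0$, after which the quadratic-over-$\mathbb{F}_5$ fact (the content of Lemmata \ref{lem:Frequency of roots} and \ref{lem:identical entries of MU states}) finishes the job. This buys you two things: you avoid the cyclotomic vanishing-sums argument, which is the part of the paper's proof most specifically tied to the arithmetic of fifth roots of unity, and you replace the case analysis by one uniform step; your method relies only on unbiasedness (unimodular entries) plus the two-root bound for quadratics mod a prime, so it is somewhat more portable, though the counting $3=2+1$ that makes the reduction a $3\times3$ system is of course still special to $d=5$. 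Your preliminary remark that Corollary \ref{cor:Linear independence of d vectors from ANY two MUBs} already pins down $\psi$ up to phase is correct but not needed; the essential justifications you do use (distinctness of the $z_i$ because $k_1'\neq k_2'$, and $c=j-j'\not\equiv0$ because the basis labels are distinct) are all in place.
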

\begin{proof}
The proof, given in \nameref{Appendix}, uses Corollary \ref{cor:Extension of Chebo to product matrices}
and Lemma \ref{lem:identical entries of MU states}.
\end{proof}
This result allows us to determine a sharp bound $T_{s}(5)$ for the
support size $\mathcal{S}(5)$.
\begin{thm}
Given a state $\psi\in\mathcal{H}_{5}$, the sharp bound on its overall
support size $\mathcal{S}(5)$ in a complete standard set of MU bases
is given by $T_{s}(5)=22$.
\end{thm}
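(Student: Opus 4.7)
The plan is to establish $T_{s}(5) = 22$ in two steps: a case analysis on the minimum support $n_{\min} \equiv \min_{j} \norm{\psi}_{j}$ yielding the lower bound $\mathcal{S}(5) \geq 22$, followed by an explicit construction of a saturating family of states. The inputs are the pair inequality $\norm{\psi}_{j} + \norm{\psi}_{k} \geq 6$ from Theorem \ref{thm: pair sums}, the integrality of each $\norm{\psi}_{j} \in \{1,\ldots,5\}$, and Lemma \ref{lem:Stronger argument for d=00003D5}.

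For the lower bound I would exhaust the possible values of $n_{\min}$. If $n_{\min} = 1$, the pair inequality forces every other support to equal $5$, so $\mathcal{S}(5) \geq 1 + 5 \cdot 5 = 26$. If $n_{\min} = 2$, every other support is at least $4$ and, since two supports equal to $2$ would violate $n_{j} + n_{k} \geq 6$, exactly one support equals $2$, giving $\mathcal{S}(5) \geq 2 + 5 \cdot 4 = 22$. If $n_{\min} \geq 4$ then $\mathcal{S}(5) \geq 24$. The decisive case is $n_{\min} = 3$: Lemma \ref{lem:Stronger argument for d=00003D5} (in the form applicable here, i.e.\ extended from the pair $(0,j)$ to an arbitrary pair of MU bases by invoking the cyclic action of the unitary $D$ of Eq.~(\ref{eq: def of D}) on $\mathcal{B}_{1},\ldots,\mathcal{B}_{d}$ together with Lemma \ref{lem: reduction of product of Hadamards}) shows that any two supports equal to $3$ force the remaining four to equal $5$; hence three-or-more supports of value $3$ is impossible, while exactly two supports of value $3$ give $\mathcal{S}(5) \geq 3 + 3 + 4 \cdot 5 = 26$, and exactly one support of value $3$ leaves the remaining five supports $\geq 4$ and $\mathcal{S}(5) \geq 23$. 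In every case $\mathcal{S}(5) \geq 22$.

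For saturation I would exhibit the family of states $\psi = (\ket{0} - \omega^{m}\ket{1})/\sqrt{2}$ indexed by $m \in \{0, 1, 2, 3, 4\}$. By inspection $\norm{\psi}_{0} = 2$. For $j \in \{1,\ldots,5\}$, direct substitution into the Hadamard matrices of Eq.~(\ref{eq:hadamard matrices defn}) gives
\begin{equation*}
[H_{j}^{\dagger}\psi]_{\ell} = \frac{1}{\sqrt{10}}\bigl(1 - \omega^{\ell - j + 1 + m}\bigr),
\end{equation*}
whose exponent cycles through all residues modulo $5$ as $\ell$ varies over $\{0,\ldots,4\}$, so exactly one component vanishes and $\norm{\psi}_{j} = 4$. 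Thus $\mathcal{S}(5) = 2 + 5 \cdot 4 = 22$, and combined with the lower bound this proves $T_{s}(5) = 22$.

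The main obstacle I anticipate is the generalisation of Lemma \ref{lem:Stronger argument for d=00003D5} required in the $n_{\min} = 3$ case, since the lemma as stated distinguishes the computational basis. If that generalisation turns out to be awkward, one can retreat to the weaker estimate obtained from Theorem \ref{thm:No compatible zero-distributions} combined with the combinatorial count $|\mathcal{Z}_{2}^{5}/{\sim}| = 2$, which already forces the number $t$ of supports equal to $3$ to satisfy $t \leq 2$; then the crude bound $\mathcal{S}(5) \geq 3t + 4(6-t) = 24 - t \geq 22$ still covers the $n_{\min} = 3$ case. Either route yields the required lower bound and, together with the explicit construction above, gives $T_{s}(5) = 22$.
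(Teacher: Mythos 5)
Your overall strategy is the paper's own: exhaust the possible support patterns using the pair bound $\norm{\psi}_j+\norm{\psi}_k\geq 6$, invoke Lemma \ref{lem:Stronger argument for d=00003D5} to kill the critical ``two supports equal to three'' configurations, and exhibit two-term superpositions of basis vectors as states attaining $22$; your explicit check that $[H_j^\dagger\psi]_\ell\propto 1-\omega^{\ell-j+1+m}$ has exactly one zero per basis is correct, and organising the case analysis by $n_{\min}$ rather than by $\norm{\psi}_0$ with relabelling is a harmless repackaging. However, your handling of the $n_{\min}=3$ case has two problems. First, the mechanism you cite for extending Lemma \ref{lem:Stronger argument for d=00003D5} to a pair of bases neither of which is computational is wrong as stated: the unitary $D$ is diagonal, so it cyclically permutes $\mathcal{B}_1,\ldots,\mathcal{B}_d$ while \emph{fixing} $\mathcal{B}_0$, and therefore can never trade a non-computational basis for the computational one. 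The extension is nevertheless true, but the correct route is the frame change already used in part $(ii)$ of the proof of Theorem \ref{thm: pair sums}: set $\phi=H_{j_1}^{\dagger}\psi$ and use Lemma \ref{lem: reduction of product of Hadamards} to write $H_{j_2}^{\dagger}H_{j_1}=M H_{t}^{\dagger}$, so that $\phi$ has support three in $\mathcal{B}_0$ and $\mathcal{B}_t$ and Lemma \ref{lem:Stronger argument for d=00003D5} applies; translating back gives full support of $\psi$ in every remaining non-computational basis, which already yields $\mathcal{S}(5)\geq 3+3+3+3\cdot5=24$ and suffices for your case analysis (recovering $\norm{\psi}_0=5$ as well needs the extra observation that each $H_t^{\dagger}$ equals $H_1$ times a monomial matrix). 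To be fair, the paper's own ``map $j^{*}\mapsto0$ and repeat'' step leans on the same implicit relabelling.

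Second, your declared fallback does not work. Theorem \ref{thm:No compatible zero-distributions} presupposes that the computational basis is one of the three bases with $(d-1)/2$ zeros, and even when it is, it only forces the \emph{two} non-computational zero distributions to be incompatible, which the two available equivalence classes in $\mathcal{Z}_2^5/\!\sim$ can accommodate; it therefore does not exclude $t=3$ supports equal to three (and says nothing at all if none of the three bases is computational). Since your crude estimate gives $\mathcal{S}(5)\geq 24-t$, the unexcluded value $t=3$ would only yield $21<22$, so this retreat does not close the $n_{\min}=3$ case. The fix is simply to commit to the Lemma-\ref{lem: reduction of product of Hadamards}-based extension described above, after which your argument matches the paper's proof in substance.
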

\begin{proof}
To construct the bound, we go through all possible values of the support
size of the state $\psi$ in the computational basis, i.e. $\norm{\psi}_{0}\in\left\{ 1\ldots\,5\right\} $. 

For $\norm{\psi}_{0}=1$, the pair inequalities (\ref{eq: all pair sum URs})
imply that the state $\psi$ must have full support in all other five
MU bases, i.e. $\norm{\psi}_{j\neq0}=5$. Hence, the overall support
of a computational basis state is given by $\mathcal{S}(5)=26$.

For $\norm{\psi}_{0}=2$, the pair inequalities (\ref{eq: all pair sum URs})
imply that the state $\psi$ can have at most one zero in each of
the other five MU bases, i.e. $\norm{\psi}_{j\neq0}=4$. Hence, the
overall support of $\psi$ is given by $\mathcal{S}(5)=22$. All 300
states of the form 
\begin{equation}
\ket{\psi}=\frac{1}{\sqrt{2}}\left(\ket{\phi_{k_{1}}^{j}}-\omega^{n}\ket{\phi_{k_{2}}^{j}}\right),\qquad j\in\left\{ 0\ldots5\right\} ,\quad k_{1},k_{2},n\in\left\{ 0\ldots4\right\} ,\quad k_{1}\neq k_{2}\,.\label{eq:saturating states of improved d=00003D5 UR}
\end{equation}
achieve this bound. More generally, for primes $d>3$, there are 
\begin{equation}
\left(d+1\right)d\binom{2}{d}=\frac{1}{2}\left(d^{2}-1\right)d^{2}\label{eq: number of states}
\end{equation}
such states as $j$ takes $(d+1)$ values, $n$ takes $d$ values
and there are $\binom{2}{d}$ different pairs of $k_{1}$ and $k_{2}$.
The three-dimensional case is an exception, as demonstrated by Theorem
\ref{thm:Saturating states in d=00003D3}.

For $\norm{\psi}_{0}=3$, the pair inequalities (\ref{eq: all pair sum URs})
rule out a support size lower than three in any basis from the set.
We apply Lemma \ref{lem:Stronger argument for d=00003D5}: the support
size of $\psi$ can equal three in only one other MU basis while the
state must have full support in the others, leading to $\mathcal{S}(5)=26$.
It is also possible to have $\norm{\psi}_{j}=4$ in all bases but
the first one, i.e. for $j\neq0$. In this case seven expansion coefficients
would vanish over the complete set, resulting in an overall support
size of $\mathcal{S}(5)=23$. This bound is larger than the one already
obtained for the case of $\norm{\psi}_{0}=2$.

If $\norm{\psi}_{0}=4$ and all other support sizes are also equal
to four, the resulting overall support of $\mathcal{S}(5)=24$ is
again larger that the previous bound of $\mathcal{S}(5)=22$ obtained
for $\norm{\psi}_{0}=2$. To improve on the value of $\mathcal{S}(5)=24$,
at least one of the other norms must fall below four, i.e. $1\leq||\psi||_{j^{*}}\leq3$
for some $j^{*}\neq0$. This assumption, however, sends us back to
one of the cases already discussed: we formally map $j^{*}\mapsto0$
and repeat the arguments given for $1\leq||\psi||_{0}\leq3$.

Similarly, full support in all six MU bases cannot beat any of the
bounds given so far. Improving on the value of $\mathcal{S}(5)=30$
is only possible by decreasing some of the support sizes, so that
we will end up in one of the previously discussed cases. Having considered
all support sizes of a state in a basis, we have exhausted all possibilities
and conclude that the bound on the overall support of a state $\psi\in\mathcal{H}_{5}$
in six MU bases is indeed given by $T_{s}(5)=22$.
\end{proof}

\subsection{Dimension $d=7$ \label{subsec: A-sharp-bound d=00003D7}}

Our aim is to identify states which minimise the overall support $\mathcal{S}(7)=\sum_{j=0}^{7}\norm{\psi}_{j}$.
To determine the sharp bound for $d=7$, we will proceed as in the
previous section. However, since no equivalent to Lemma \ref{lem:Stronger argument for d=00003D5}
is known, we will partly rely on numerical results. 

For $\norm{\psi}_{0}=1$, the pair inequalities (\ref{eq: all pair sum URs})
imply that the state $\psi$ must have full support in all other seven
MU bases,$\norm{\psi}_{j\neq0}=7$. Hence, the overall support of
$\psi$ is given by $\mathcal{S}(7)=50$. 

For $\norm{\psi}_{0}=2$, the pair inequalities (\ref{eq: all pair sum URs})
imply that the state $\psi$ can have at most one zero in each of
the other seven MU bases, i.e. $\norm{\psi}_{j\neq0}=6$. Hence, the
overall support of $\psi$ is given by $\mathcal{S}(7)=44$, achieved
by states of the form 
\begin{equation}
\ket{\psi}=\frac{1}{\sqrt{2}}\left(\ket{\phi_{k_{1}}^{j}}-\omega^{n}\ket{\phi_{k_{2}}^{j}}\right),\qquad j\in\left\{ 0\ldots7\right\} ,\quad k_{1},k_{2},n\in\left\{ 0\ldots6\right\} ,\quad k_{1}\neq k_{2}\,.\label{eq:saturating states of improved d=00003D7 UR}
\end{equation}
According to Eq. (\ref{eq: number of states}), there are 1176 such
states. 

For $\norm{\psi}_{0}=3$, the smallest possible value of $\mathcal{S}(7)$
compatible with the pair inequalities is $\mathcal{S}(7)=38$, as
the states in the other bases must have support size at least five
each, i.e. $\norm{\psi}_{j\neq0}=5$. However, no state achieving
this bound has been found (numerically). The computations show that
a state with support sizes three and five in two MU bases must have
full support in the remaining six MU bases so that $\mathcal{S}(7)=50$.
Assuming support size six in all but the first MU basis, the overall
support would be $\mathcal{S}(7)=45$ which is higher than the bound
of $\mathcal{S}(7)=44$ achievable for $\norm{\psi}_{0}=2$ .

Given a support size of four in the first MU basis, $\norm{\psi}_{0}=4$,
not all other support sizes can be equal to four according to Theorem
\ref{thm:No saturation in d=00003D5,7}. One case corresponds a state
having support size four in the first and one other MU basis. It is
possible to (numerically) construct states for which the remaining
six supports sizes must be equal to six, leading to $\mathcal{S}(7)=44$.
We neither know analytic expressions for these states nor their total
number. The other scenario compatible with $\norm{\psi}_{0}=4$ corresponds
to the remaining seven support sizes each equalling five, i.e.$\norm{\psi}_{j\neq0}=5$,
leading to $\mathcal{S}(7)=39$ but we have obtained no evidence for
a state achieving this value.

Assume now that $\norm{\psi}_{0}=5$ and that the support of $\psi$
in the other MU bases is also at least five (we exclude all cases
with $||\psi||_{j^{*}}<5$ for some $j^{*}\neq0$ since---upon relabeling
the MU bases---they have effectively already been considered). An
overall support of $\mathcal{S}(7)=40$ results, below the previously
obtained value of $\mathcal{S}(7)=44$ for $\norm{\psi}_{0}=2$. Numerically
searching for states achieving this bound, we find that pairs of states
with support size five in two MU bases exist but no triples, ruling
out the value $\mathcal{S}(7)=40$. Assuming support size five in
two bases and at least six in the remaining six MU bases leads to
a higher support size, $\mathcal{S}(7)=46$.

Starting out with a support size of $\norm{\psi}_{0}>5$, no smaller
lower bound will exist if all other support sizes take a value of
at least six as $\mathcal{S}(7)\geq48$ follows immediately. If not
all support sizes take a value of at least six we are being sent back
to a previously discussed case. Thus, we have established the sharp
bound of $T_{s}(7)=44$ on the overall support of seven-component
vectors in eight MU bases, partly relying on numerics.

\section{Summary and Conclusions \label{sec:Summary-and-Conclusions}}

Tao's uncertainty relation provides a lower bound on the sum of the
support sizes of a state $\psi\in\mathcal{H}_{d}$ in the standard
basis and its Fourier transform, for prime dimensions $d$. By generalising
the bound to arbitrary pairs of mutually unbiased bases (cf. Theorem
\ref{thm: pair sums}), we show in Theorem \ref{thm:multi-bound}
that the sum of the support sizes of a state $\psi$ in a complete
standard set of $\left(d+1\right)$ MU bases cannot fall below $T(d)\equiv(d+1)^{2}/2$.
The bound is found to be sharp for $d=3$, and proofs were given that
it cannot be saturated for dimensions $d=2$, $5$ and $7$. Numerical
results indicate that no states exist which achieve the bound for
prime numbers up to $d\leq19$. Table \ref{tab: summary S_d} summarises
these results. We conjecture that the inequality is saturated in dimension
$d=3$ only.

\begin{table}[H]
\begin{centering}
\begin{tabular}{|c|c|c|c|c|c|c|c|c|}
\hline 
$d$ & 2 & 3 & 5 & 7 & 11 & 13 & 17 & 19\tabularnewline
\hline 
\hline 
$T(d)$ & 9/2 & 8 & 18 & 32 & 72 & 98 & 162 & 200\tabularnewline
\hline 
$T(d)$ achievable? & $\times$ & $\checkmark$ & $\times$ & $\times$ & $(\times)$ & $(\times)$ & $(\times)$ & $(\times)$\tabularnewline
\hline 
$T_{s}(d)$ & 5 & 8 & 22 & (44) & ? & ? & ? & ?\tabularnewline
\hline 
\end{tabular}
\par\end{centering}
\caption{\label{tab: summary S_d}Lower and sharp bounds $T(d)$ and $T_{s}(d)$,
respectively, on the support sizes of states $\psi\in\mathcal{H}_{d}$
when expanded in the complete standard set of $(d+1)$ MU bases, for
small prime dimensions (numerical results in parentheses).}
\end{table}

Tao's pair support inequality has been used to identify \emph{KD-nonclassical}
states, i.e. states whose Kirkwood-Dirac quasiprobability distribution
has negative or complex contributions \cite{de_bievre_complete_2021}.
Given two orthonormal bases of a finite-dimensional space $\mathcal{H}_{d},d\in\mathbb{N},$
with no common elements, a state $\psi$ is found to be KD-nonclassical
if the sum of its support sizes in these bases is greater than $\left(d+1\right)$.
KD-classicality is readily generalised to complete sets of MU bases
instead of pairs only. In this context, the results of Sec. \ref{sec:Saturating-the-inequality}
mean that no states exist which are KD-classical with respect to the
standard set of $(d+1)$ MU bases in small prime dimensions. When
$d=3$, the claim follows by directly computing the complex KD distributions
of the nine minimal uncertainty states of Eq. (\ref{eq: states that saturate extended UR d=00003D3}).

The uncertainty of quantum states involving more than two MU bases
has been studied before. Building on a result for a pair of mutually
unbiased observables \cite{maassen_generalized_1988}, entropic uncertainty
relations have been found which involve $(d+1)$ MU bases \cite{ivanovic_inequality_1992,sanchez_entropic_1993}.
Similarly, Heisenberg's uncertainty relation for continuous variables
has a counterpart based on \emph{three} observables satisfying the
canonical commutation relation pairwise \cite{kechrimparis_heisenberg_2014}.
Often, the generalisations are straightforward but the resulting inequalities
tend not to be achievable. Sharp bounds and the minimising states
are usually difficult to find (see e.g. \cite{wu_entropic_2009,kechrimparis_geometry_2017}
and the review \cite{wehner_entropic_2010}). In this respect, the
additive inequality proposed here is no exception.

Support uncertainty relations for multiple MU bases have many interesting
features. As for the pair inequalities, a \emph{finite} number of
measurements can be sufficient to confirm that a quantum state satisfies
a specific bound. The minimum number of required measurements is simply
given by the value of the relevant bound, be it sharp or not: it is
sufficient that $T(d)$ different outcomes be registered when measurements
in the MU bases are performed on the state $\psi$. This property
also ensures that KD-nonclassicality may sometimes be detected with
a finite number of measurements.

Furthermore, the lower bounds of support inequalities neither depend
on the state considered nor on the value of Planck's constant. The
absence of $\hbar$ as a parameter suggests that no support inequalities
for continuous variables will emerge in the limit of systems with
ever larger dimensions $d$. The maximal support size of a quantum
state grows without bound and, therefore, does not approach a well-defined
quantitative measure for uncertainty. Finally, we would like to point
out that determining bounds on support sizes is technically difficult
since they are basis-dependent quantities.

Establishing sharp bounds for dimensions $d\geq11$ remains an open
question which will require new insights since numerical approaches
become unfeasible with increasing dimensions. Other directions of
future work will be to study support uncertainty relations for smaller
sets of MU bases such as triples, for example. The simplification
stems from the considerably smaller number of parameters in comparison
to complete MU sets. Preliminary analytical and numerical results
for small prime dimensions $3\leq d\leq19$ suggest that no state
can saturate the bound $T(d;3)$ on the triple uncertainty relation
(\ref{eq: triple sum inequality}) for $d\neq3$.
\begin{acknowledgement*}
VF would like to thank the WW Smith fund at the University of York
for financial support. The existence of an alternative proof of Lemma
\ref{lem: reduction of product of Hadamards} mentioned in the Appendix
has kindly been pointed out by a referee.
\end{acknowledgement*}

\section{Appendix\label{Appendix}}

We present proofs of Lemma \ref{lem: reduction of product of Hadamards},
Corollary \ref{cor:Linear independence of d vectors from ANY two MUBs}
and Lemmata \ref{lem:identical entries of MU states} and \ref{lem:Stronger argument for d=00003D5},
in this order.

\productHadamardlemma*
\begin{proof}
Using Eqs. (\ref{eq:hadamard matrices defn}), we calculate the matrix
elements of the product $H_{k}^{\dagger}H_{j}$, with $j,k\neq0$
and $j\neq k$, 
\begin{equation}
\left[H_{k}^{\dagger}H_{j}\right]_{\ell\ell^{\prime}}=\braket{\phi_{\ell}^{k}}{\phi_{\ell^{\prime}}^{j}}=\frac{1}{\sqrt{d}}G_{d}(j-k,\ell-\ell^{\prime})\,,\label{eq:els of HkHj}
\end{equation}
with the generalised Gauss sum \cite{berndt_gauss_1998} 
\begin{equation}
G_{d}(j,\ell)=\frac{1}{\sqrt{d}}\sum_{x=0}^{d-1}\omega^{jx^{2}+\ell x}\,.\label{eq: generalized Gauss sum identity}
\end{equation}
Using $1=\omega^{\left(\ell-\ell^{\prime}\right)^{2}\chi}\omega^{-\left(\ell-\ell^{\prime}\right)^{2}\chi}$
in (\ref{eq:els of HkHj}) and letting $\chi$ be an integer satisfying
$4\left(j-k\right)\chi\equiv1\,\mod d$, we obtain a standard Gauss
sum $G_{d}\left(j-k,0\right)$ with known closed form. Explicitly,
for $j\neq k$, we obtain
\begin{align}
G_{d}(j-k,\ell-\ell^{\prime}) & =\omega^{-\left(\ell-\ell^{\prime}\right)^{2}\chi}\frac{1}{\sqrt{d}}\sum_{x}\omega^{\left(j-k\right)x^{2}+\left(\ell-\ell^{\prime}\right)x}\omega^{\left(\ell-\ell^{\prime}\right)^{2}\chi}\nonumber \\
 & =\omega^{-\left(\ell-\ell^{\prime}\right)^{2}\chi}\frac{1}{\sqrt{d}}\sum_{x}\omega^{\left(j-k\right)\left[x+\overline{2}\overline{\left(j-k\right)}\left(\ell-\ell^{\prime}\right)\right]^{2}}\nonumber \\
 & =\omega^{-\left(\ell-\ell^{\prime}\right)^{2}\chi}\left[\frac{1}{\sqrt{d}}\sum_{x}\omega^{\left(j-k\right)x^{2}}\right]=\omega^{-\left(\ell-\ell^{\prime}\right)^{2}\chi}G_{d}\left(j-k,0\right)\nonumber \\
 & =\omega^{-\left(\ell-\ell^{\prime}\right)^{2}\chi}\left(\frac{j-k}{d}\right)\varepsilon_{d}\,,\label{eq:Generalised Gauss sum solution}
\end{align}
where $\bar{a}$ denotes the multiplicative inverse of $a$, $\bar{a}a\equiv1$
mod $d$, while $\left(\frac{a}{b}\right)$ denotes the Jacobi symbol
of the integers $a$ and $b$, and 
\begin{equation}
\varepsilon_{d}=\begin{cases}
1 & \text{if}\;d\equiv1\,\mod4\,,\\
i & \text{if}\;d\equiv3\,\mod4\,.
\end{cases}\label{eq:epsilon factor gauss sum-1}
\end{equation}
The sum $G_{d}\left(j-k,\ell-\ell^{\prime}\right)$ in (\ref{eq:Generalised Gauss sum solution})
reduces to a phase factor as it should since the components of the
matrix $H_{k}^{\dagger}H_{j}$ are given by the overlap of states
stemming from different MU bases.

Combining (\ref{eq:els of HkHj}) and (\ref{eq:hadamard matrices defn}),
we now determine the elements of the matrix $V\equiv H_{k}^{\dagger}H_{j}H_{t}$
for arbitrary $t\neq0$:
\begin{equation}
V_{\ell\ell^{\prime}}=\sum_{\ell^{\prime\prime}=0}^{d-1}\braket{\phi_{\ell}^{k}}{\phi_{\ell^{\prime\prime}}^{j}}\braket{\ell^{\prime\prime}}{\phi_{\ell^{\prime}}^{t}}=\frac{1}{d}\sum_{\ell^{\prime\prime}=0}^{d-1}G_{d}(j-k,\ell-\ell^{\prime\prime})\,\omega^{-\ell^{\prime}\ell^{\prime\prime}+\left(t-1\right)\ell^{\prime\prime}{}^{2}}\,.\label{eq:proof new lemma - part 1}
\end{equation}
We can simplify this expression by substituting (\ref{eq:Generalised Gauss sum solution})
into it, to find 
\begin{equation}
\begin{aligned}V_{\ell\ell^{\prime}} & =\frac{1}{d}\left(\frac{j-k}{d}\right)\varepsilon_{d}\,\sum_{\ell^{\prime\prime}=0}^{d-1}\omega^{-\left(\ell-\ell^{\prime\prime}\right)^{2}\chi}\,\omega^{-\ell^{\prime}\ell^{\prime\prime}+\left(t-1\right)\ell^{\prime\prime}{}^{2}}\\
 & =\frac{1}{d}\left(\frac{j-k}{d}\right)\varepsilon_{d}\,\omega^{-\ell^{2}\chi}\sum_{\ell^{\prime\prime}=0}^{d-1}\omega^{\left(t-1-\chi\right)\ell^{\prime\prime}{}^{2}+\left(2\ell\chi-\ell^{\prime}\right)\ell^{\prime\prime}}\,.
\end{aligned}
\label{eq:proof new lemma - part 2}
\end{equation}
Letting $t=1+\chi$, we obtain sums over all $d$-th roots of one
which vanish unless the exponents of $\omega$ vanish, 
\begin{equation}
\sum_{\ell^{\prime\prime}=0}^{d-1}\omega^{\left(2\ell\chi-\ell^{\prime}\right)\ell^{\prime\prime}}=\begin{cases}
d & \text{if}\quad\ell^{\prime}=2\ell\chi\mod d\,,\\
0 & \text{otherwise}\,.
\end{cases}\label{eq:proof new lemma - part 3}
\end{equation}
Thus, for this value of $t$, the matrix elements of $V$ take the
form
\begin{equation}
V_{\ell\ell^{\prime}}=\begin{cases}
\left(\frac{j-k}{d}\right)\varepsilon_{d}\,\omega^{-\ell^{2}\chi} & \text{if}\quad\ell^{\prime}=2\ell\chi\mod d\,,\\
0 & \text{otherwise}\,,
\end{cases}\label{eq:proof new lemma - part 4}
\end{equation}
so that the only non-zero elements of the matrix $V$ are those with
indices $\left(\ell,2\ell\chi\mod d\right)$. Each row $\ell$ has
exactly one non-zero entry and the map $\ell\mapsto2\ell\chi\mod d$
constitutes a permutation of the elements of $\left\{ 0\ldots\,d-1\right\} $
since $d$ is a prime number and $2\chi\neq0\,\mod d$. (Assume this
was not the case, i.e. $2\chi x\,\mod d=2\chi y\,\mod d$ for some
$x,y\in\left\{ 0\ldots\,d-1\right\} $, $x\neq y$. Then $2\chi\left(x-y\right)=nd$
for some integer $n$ which is never the case whenever $d$ is prime
and $\chi\neq0$ mod $d$.) As a consequence, each column will also
display exactly one non-zero entry.

Therefore, the product $V$ of three Hadamard matrices is equal to
a monomial matrix $M(j,k)$ if $t=1+\chi$, i.e.
\begin{equation}
H_{k}^{\dagger}H_{j}=M(j,k)H_{1+\chi}^{\dagger}\,.\label{proof new lemma - part 5}
\end{equation}

We complete the proof by showing that the matrix $V=H_{k}^{\dagger}H_{j}H_{t}$
is \emph{not} monomial for any other value of $t$. For $t\neq1+\chi$,
the sum on the right-hand side of (\ref{eq:proof new lemma - part 2})
represents another generalised Gauss sum so that
\begin{equation}
V_{\ell\ell^{\prime}}=\frac{1}{\sqrt{d}}\left(\frac{j-k}{d}\right)\varepsilon_{d}\,\omega^{-\ell^{2}\chi}G_{d}\left(t-1-\chi,2\ell\chi-\ell^{\prime}\right)\label{eq:proof new lemma - part 6}
\end{equation}
with $G_{d}\left(t-1-\chi,2\ell\chi-\ell^{\prime}\right)=\sqrt{d}\braket{\phi_{2\ell\chi}^{1+\chi}}{\phi_{\ell^{\prime}}^{t}}$.
We can now substitute the expression (\ref{eq:Generalised Gauss sum solution})
and obtain
\begin{equation}
V_{\ell\ell^{\prime}}=\frac{1}{\sqrt{d}}\varepsilon_{d}^{2}\left(\frac{j-k}{d}\right)\left(\frac{t-1-\chi}{d}\right)\omega^{-\ell^{2}\chi}\omega^{-(2\ell\chi-\ell^{\prime})^{2}\widetilde{\chi}}\label{eq:proof new lemma - part 7}
\end{equation}
where $\widetilde{\chi}\in\left\{ 1\ldots\,d\right\} $ is an integer
satisfying $4\left(t-1-\chi\right)\widetilde{\chi}=1\mod d$. Hence,
the matrix elements $V_{\ell\ell^{\prime}}$ are all non-zero confirming
that the matrix $V$ is not monomial unless $t=1+\chi$.
\end{proof}
An alternative, shorter proof of Lemma \ref{lem: reduction of product of Hadamards}
can be given by representing the Hadamard matrices $H_{j}$ as $2\times2$
matrices in \textbf{$\text{\textbf{SL}}(2,\mathbb{Z}/d\mathbb{Z})$
}(cf. \cite{neuhauser_explicit_2002}).

\linearindepcorollary*
\begin{proof}
Construct a matrix $M$ of order $d\times\left(d_{1}+d_{2}\right)$
from any $\left(d_{1}+d_{2}\right)\leq d$ column vectors---expressed
in the computational basis---from the two MU bases $\mathcal{B}_{j_{1}}$
and $\mathcal{B}_{j_{2}}$. Then left-multiply $M$ by $H_{j_{1}}^{\dagger}$.
Since $\bra xH_{j_{1}}^{\dagger}\ket{\phi_{k}^{j_{1}}}=\braket xk$,
the first $d_{1}$ columns will be elements of the computational basis,
while the remaining $d_{2}$ columns will be taken from $H_{j_{1}}^{\dagger}H_{j_{2}}$
since $\bra xH_{j_{1}}^{\dagger}\ket{\phi_{k}^{j_{2}}}=\bra xH_{j_{1}}^{\dagger}H_{j_{2}}\ket k$.
By swapping rows appropriately via a permutation operator $P$ which
does not change linear independence of column vectors, the top left
square can be mapped to the $d_{1}$-dimensional identity. For example,
if we consider $d=5$ and $d_{1}=d_{2}=2$, we obtain a $5\times4$
matrix, 
\begin{equation}
PH_{j_{1}}^{\dagger}M=\left(\begin{array}{cc|cc}
1 & 0 & \ast & \ast\\
0 & 1 & * & *\\
\hline 0 & 0 & \ast & \ast\\
0 & 0 & \ast & \ast\\
0 & 0 & * & *
\end{array}\right)\,,\label{eq:PHM matrix}
\end{equation}
where the asterisks refer to the elements of $H_{j_{1}}^{\dagger}H_{j_{2}}$.

The $\left(d_{1}+d_{2}\right)\leq d$ vectors are linearly dependent
only if $M$ does \textit{not} have full rank, i.e. $\text{rank}\left(M\right)<d_{1}+d_{2}$.
Since $PH_{j_{1}}^{\dagger}$ is unitary, it follows that $\text{rank}\left(PH_{j_{1}}^{\dagger}M\right)<d_{1}+d_{2}$.
Given the form of the matrix (\ref{eq:PHM matrix}), the bottom-right
part of $PH_{j_{1}}^{\dagger}M$ must contain a $\left(d_{2}\times d_{2}\right)$
submatrix with vanishing determinant. However, this is prohibited
by Corollary \ref{cor:Extension of Chebo to product matrices} which
ensures for all prime numbers $d$ that $H_{j_{1}}^{\dagger}H_{j_{2}}$
has non-vanishing minors if $j_{1}\neq j_{2}$. Thus, all $(d_{1}+d_{2})$
column vectors of $M$ must be linearly independent.
\end{proof}
\identicalentries*
\begin{proof}
By substituting (\ref{eq:MU basis states odd prime d}) into (\ref{eq:at most two MUB vector entries coincide})
and taking the logarithm, one obtains $ax^{2}+bx-n=0\:\text{mod}\,d$
where $a=j_{1}-j_{2}$ and $b=k_{2}-k_{1}$. This quadratic equation
can have no more than two integer solutions. Thus, at most two components
of the states $\ket{\phi_{k_{1}}^{j_{1}}}$ and $\ket{\phi_{k_{2}}^{j_{2}}}$
can be identical in the computational basis, up to multiplication
by $\omega^{n}$. If $j_{1}=j_{2}$, then $a=0$ and the equation
is linear with a single solution for each value of $n$.
\end{proof}
\restrictionfivedim*
\begin{proof}
Four scalar products with the state $\psi$ vanish,
\begin{equation}
\bk{x_{1}}{\psi}=\bk{y_{1}}{\psi}=\bk{\phi_{x_{2}}^{j}}{\psi}=\bk{\phi_{y_{2}}^{j}}{\psi}=0\,,\label{eq: four zeros}
\end{equation}
two for each of the bases. Hence, the zero distributions of the states
$\psi$ and $H_{j}^{\dagger}\psi$, are given by $\mathcal{Z}^{0}=\left\{ x_{1},y_{1}\right\} $
and $\mathcal{Z}^{j}=\left\{ x_{2},y_{2}\right\} $ respectively,
with four integer numbers $x_{1},\ldots,y_{2}\in\{0\ldots4\}$. Now
suppose that there is a third MU basis $\mathcal{B}_{j^{\prime}}$,
different from both $\mathcal{B}_{0}$ and $\mathcal{B}_{j}$, in
which the state $\psi$ does \emph{not} have full support. In other
words, there is at least one vanishing scalar product, $\bk{\phi_{x_{3}}^{j^{\prime}}}{\psi}=0$,
say, where $x_{3}\in\{0\ldots4\}$. Expressing the components of the
five vectors $\ket{x_{1}}$, $\ket{y_{1}}$, $\ket{\phi_{x_{2}}^{j}}$,
$\ket{\phi_{y_{2}}^{j}}$, $\ket{\phi_{x_{3}}^{j^{\prime}}}$ with
respect to the computational basis and arranging them into a $5\times5$
matrix, we find, after permuting the rows and rephasing the last three
vectors,
\begin{equation}
M=\frac{1}{\sqrt{5}}\left(\begin{array}{ccccc}
\sqrt{5} & 0 & * & * & *\\
0 & \sqrt{5} & * & * & *\\
0 & 0 & 1 & 1 & 1\\
0 & 0 & \omega^{a} & \omega^{c} & \omega^{e}\\
0 & 0 & \omega^{b} & \omega^{d} & \omega^{f}
\end{array}\right)\,,\qquad a,\ldots,f\in\{0\ldots4\}\,.\label{eq: 5x5 matrix M}
\end{equation}
Being elements of the Hadamard matrices $H_{j}$ and $H_{j^{\prime}}$,
the entries of the last three columns are powers of $\omega$, a fifth
root of one. Corollary \ref{cor:Linear independence of d vectors from ANY two MUBs}
ensures the linear independence of the first four vectors.

If the determinant of $M$ does not vanish, $\det M\neq0,$ then the
five column vectors forming it are linearly \emph{independent}, thus
spanning $\mathcal{H}_{5}$. However, the only state being orthogonal
to all of $\mathcal{H}_{5}$ is $\psi=0$ which does not represent
a quantum state. Thus, for an acceptable state $\psi$ producing the
given five vanishing expansion coefficients, the five vectors involved
must be linearly \emph{dependent}, i.e. $\det M=0$. Consequently,
the determinant of the bottom right $3\times3$ matrix of $M$ must
vanish, 
\begin{equation}
\Delta\equiv\det\left(\begin{array}{ccc}
1 & 1 & 1\\
\omega^{a} & \omega^{c} & \omega^{e}\\
\omega^{b} & \omega^{d} & \omega^{f}
\end{array}\right)=\omega^{a+d}+\omega^{e+b}+\omega^{c+f}-\omega^{c+b}-\omega^{e+d}-\omega^{a+f}=0\,.\label{eq:determinant of 3x3 matrix d=00003D5 proof}
\end{equation}

Each of the six terms in this expression is a power of a fifth root
$\omega$ of unity, hence non-zero. It is well known that the set
$\left\{ \omega^{n}\,|\,n=0,...,3\right\} $ is linearly independent
over the rational numbers $\mathbb{Q}$. As a consequence, every non-zero
complex number that is expressible as a linear combination (over $\mathbb{Q}$)
of these roots of unity has a unique expression. Since $\omega^{4}=-1-\omega-\omega^{2}-\omega^{3}$,
it must follow that the only decomposition of zero over $\mathbb{Q}$
in terms of fifth roots of $1$ is $0=q\left(1+\omega+\omega^{2}+\omega^{3}+\omega^{4}\right)$,
with some rational number $q\in\mathbb{Q}$. In other words, for the
sum to vanish all five roots must be multiplied to the same rational
coefficient.

We distinguish two cases: either $q\neq0$ or $q=0$. Since (\ref{eq:determinant of 3x3 matrix d=00003D5 proof})
involves six terms with coefficients $\pm1$, we conclude that the
case of $q\neq0$ cannot be realised: it is impossible to get all
five roots to appear with the same non-zero coefficient. For example,
let $(a+d)=(e+b)\mod5$, then Eq. (\ref{eq:determinant of 3x3 matrix d=00003D5 proof})
reduces to 
\begin{equation}
\Delta=2\omega^{a+d}+\omega^{c+f}-\omega^{c+b}-\omega^{e+d}-\omega^{a+f}\label{eq:Case 1: vanishing determinant}
\end{equation}
Since all roots must appear, the exponents in (\ref{eq:Case 1: vanishing determinant})
are all different. However, the coefficients are not equal throughout
and the sum cannot vanish. A similar argument holds for any other
equality between exponents.

The case of $q=0$ must therefore apply: the determinant $\Delta$
vanishes if and only if the six terms in Eq. (\ref{eq:determinant of 3x3 matrix d=00003D5 proof})
cancel each other in pairs, i.e. the the powers of $\omega$ must
occur an even number of times, and with an equal number of positive
and negative coefficients. Hence, the first term in (\ref{eq:determinant of 3x3 matrix d=00003D5 proof})
is necessarily paired up with one of the powers with a negative coefficient
leading. Three cases arise which we will consider separately.

$(i)$ For the first and the fourth term to cancel, we must have $(a+d)=(c+b)\mod5\,,$
or 
\begin{equation}
(a-b)=(c-d)\mod5\,,\label{eq: a-b=00003Dc-d}
\end{equation}
relating the expansion coefficients of two vectors of the \emph{same}
basis, namely $\ket{\phi_{x_{2}}^{j}}$ and $\ket{\phi_{y_{2}}^{j}}$.
Consequently, the third and fourth column vectors in the matrix $M$
in (\ref{eq: 5x5 matrix M}) have (at least) two equal entries in
identical positions, up to an irrelevant common phase factor. This
would result in a vanishing $2\times2$ submatrix of $H_{j}$ contradicting
Corollary \ref{cor:Extension of Chebo to product matrices} (and Lemma
\ref{lem:identical entries of MU states}). Thus the determinant $\Delta$
cannot vanish in this case.

$(ii)$ For the first and the fifth term to cancel, we must have $(a+d)=(e+d)\mod5\,,$
or 
\begin{equation}
a=e\,\mod5\,,\label{eq: a=00003De}
\end{equation}
relating the expansion coefficients of two vectors of \emph{different}
bases, namely $\ket{\phi_{x_{2}}^{j}}$ and $\ket{\phi_{x_{3}}^{j^{\prime}}}$.
Corollary \ref{cor:Extension of Chebo to product matrices} does not
apply to this case. We do know, however, that the fourth term in the
sum (\ref{eq:determinant of 3x3 matrix d=00003D5 proof}) must pair
up with either the second or the third term of the sum in (\ref{eq:determinant of 3x3 matrix d=00003D5 proof}).
In the first case, we find $(c+b)=(e+b)\mod5\,,$or 
\begin{equation}
c=e\,\mod5\,.\label{eq: c=00003De}
\end{equation}
Given the constraint (\ref{eq: a=00003De}), we we obtain the identity
\begin{equation}
a=c\,\mod5\,,\label{eq: a=00003Dc}
\end{equation}
again relating the expansion coefficients of two vectors of the \emph{same}
basis, namely $\ket{\phi_{x_{2}}^{j}}$ and $\ket{\phi_{y_{2}}^{j}}$.
As in the Case $(i)$, a contradiction to Corollary \ref{cor:Extension of Chebo to product matrices}
arises.

In the second case, we pair up terms three and four of the sum (\ref{eq:determinant of 3x3 matrix d=00003D5 proof}),
leading to the identity $(c+b)=(c+f)\mod5$, or
\begin{equation}
b=f\,\mod5\,.\label{eq: b=00003Df}
\end{equation}
Together with Eq. (\ref{eq: a=00003De}), it follows that the last
three elements of the third and and fifth columns of $M$ are identical.
However, according to Lemma \ref{lem:identical entries of MU states},
two vectors stemming from two different bases MU to the computational
basis can have at most two identical components.

$(iii)$: Assuming that the first and the sixth term of the sum (\ref{eq:determinant of 3x3 matrix d=00003D5 proof})
cancel again leads to a contradiction along the lines of the argument
considered in Case $(ii).$

Thus, we are forced to conclude that the determinant $\Delta$ cannot
not vanish for any $j^{\prime}\neq0,j$ and any $x_{3}$, which implies
that the state $\psi$ must have full support.
\end{proof}

\end{document}